\def\A{{A^n_\epsilon}}
\def\bs{{\boldsymbol{s}}}
\def\bu{{\boldsymbol{u}}}
\def\bv{{\boldsymbol{v}}}
\def\bx{{\boldsymbol{x}}}
\def\by{{\boldsymbol{y}}}
\def\pr{{\text{P}}}
\def\c{{\mathcal{C}}}
\newlength{\dhatheight}
\newcommand{\doublehat}[1]{%
    \settoheight{\dhatheight}{\ensuremath{\hat{#1}}}%
    \addtolength{\dhatheight}{-0.35ex}%
    \hat{\vphantom{\rule{1pt}{\dhatheight}}%
    \smash{\hat{#1}}}}
\newtheorem{remark}{Remark}
\newtheorem{theorem}{Theorem}
\newtheorem{proposition}{Proposition}
\begin{document}
\title{Multi-layer Gelfand--Pinsker Strategies for the Generalized
  Multiple-Access Channel} \author{\IEEEauthorblockN{\normalsize
    Mohammad Javad Emadi$^*$, Majid Nasiri Khormuji$^\dag$, Mikael
    Skoglund$^\ddagger$ and Mohammad Reza Aref$^\star$}
  \IEEEauthorblockA{\small $^*$EE Dept., Amirkabir University of Technology, Tehran, Iran\\
    $^\dag$Huawei Technologies Sweden,\\
    $^\ddagger$Royal Institute of Technology (KTH), EE Dept. and the ACCESS Linnaeus Center, Stockholm, Sweden\\
    $^\star$Information Systems and Security Lab (ISSL), EE Dept., Sharif University of Technology, Tehran, Iran\\
    E-mail(s): mj.emadi@aut.ac.ir, majid.nk@huawei.com,
    skoglund@ee.kth.se, aref@sharif.edu}

\thanks{This work was supported in part by the Iranian National Science Foundation (INSF) under Contract No. 88114.46-2010
and by the Iran Telecom Research Center (ITRC) under Contract No. 500.18495.}
 }
\maketitle
\begin{abstract}
  We study a two-user state-dependent generalized multiple-access channel (GMAC) with correlated states. It is assumed that each encoder has \emph{noncausal} access to channel state information (CSI). We develop an achievable rate region by employing rate-splitting, block Markov encoding, Gelfand--Pinsker multicoding, superposition coding and joint typicality decoding.  In the proposed scheme, the encoders use a partial decoding strategy to collaborate in the next block, and the receiver uses a backward decoding strategy with joint unique decoding at each stage. Our achievable rate region includes several previously known regions proposed in the literature for different scenarios of multiple-access and relay channels. Then, we consider two Gaussian GMACs with additive interference. In the first model, we assume that the interference is known noncausally at both of the encoders and construct a multi-layer Costa precoding scheme that removes \emph{completely} the effect of the interference. In the second model, we consider a doubly dirty Gaussian GMAC in which each of interferences is known noncausally only at one encoder.  We derive an inner bound and analyze the achievable rate region for the latter model and interestingly prove that if one of the encoders knows the full CSI, there exists an achievable rate region which is \emph{independent} of the power of interference.
\end{abstract}
\IEEEpeerreviewmaketitle
\section{Introduction}
Achieving higher throughput, reliability and robustness against channel variations is an important requirement for next generation wireless networks. Toward this end, spatial diversity, which is possible by use of multiple transmit antennas, is an attractive solution. However, due to restricted size of a mobile end-devices, this technique can be impractical for uplink communication scenarios. Thus, cooperation between nodes is introduced to achieve the diversity gain. In wireless communication networks there is freedom to establish cooperation between users, since while one user transmits its signal, other users in its neighborhood can overhear a noisy version of the transmission (which can be modelled as a type of feedback signal). Hence, the neighbors can play the role of relay nodes to help the original user to convey its message to the receiver. Imagine an uplink channel with two users and one receiver such that in addition to the receiver, each user can overhear each others' transmissions. We refer to this channel as a GMAC which is also known as the MAC with generalized feedback. Carleial in \cite{carleial1982multiple} derived an achievable rate region for the discrete memoryless (DM) GMAC. Willems in \cite{willemsThesis} studied the DM GMAC and proposed an achievable rate region based on the backward decoding technique. In general, it is difficult to compare these two achievable rate regions. In \cite{Zeng1989IT}, it is proved that for some special cases, Willems' achievable rate region includes Carleial's. Willems in \cite{Willems85} has also established the capacity region of DM MAC with cribbing encoders as an special case of DM GMAC wherein each user \emph{ideally} receives the transmitted signal of the other user with one channel use delay. We highlight that the DM GMAC also includes the classical relay channel \cite{cover79} as a special case.

On the other hand, one important class of channels is formed by those where the statistics of the channel are controlled by random parameters. These channels are referred to as state-dependent channels which have a wide variety of applications in wireless communications, writing on memories with defects, information embedding (IE), steganography  and authentication check. In particular, fading parameters and interference signals can be interpreted as CSI in a wireless communication system. Depending on the network configuration, the CSI may be available at some nodes in the network. Shannon in \cite{shannon1958channels} established the capacity of a point-to-point DM state-dependent channel in which the CSI is \emph{causally} available at the encoder. Later the capacity of the channel with \emph{noncausal} CSI at the transmitter (CSIT) was established by Gelfand--Pinsker coding (GPC) technique in \cite{gel1980coding}. Costa in \cite{costa1983writing} considered the Gaussian point-to-point channel with additive interference wherein the interference (i.e. the channel state) was known noncausally at the encoder. Remarkably, Costa proved that the capacity of the Gaussian channel with noncausal knowledge of the interference at the encoder is equal to that with no interference. Interestingly, IE can be modeled as a problem of state-dependent channel with noncausal CSIT, where the host signal plays the role of side information. Moreover, problem of distributed IE (i.e., multiuser IE) is studied in the literature \cite{kotagiri2005reversible,zaidi2007broadcast}. For a literature survey on state-dependent channels and its applications see \cite{keshet2007channel}. Besides, problems of imperfect CSI at the transmitter(s) and/or receiver(s) are investigated in \cite{pablothesis2007}.

The state-dependent MAC subject to different conditions has been considered in the literature and achievable rate regions have been established. For instance based on the availability of noncausal CSI at some encoders, an achievable rate region is derived in \cite{somekh2008cooperative,kotagiri2008multiaccess,KhosraviITW2011,EmadiITW2012}. Since there is a connection between receiving feedback signals at the transmitters and causal knowledge about the CSI, state-dependent MAC channel models with causal CSIT are studied in \cite{lapidoth2010multiple,li2010multiple,zaidi2012capacity}. In particular, the state-dependent DM MAC with correlated states and ideal cribbing encoders is investigated in \cite{EmadiIET2012} and bounds on the capacity region are derived, which reduced to the capacity region of the \emph{asymmetric} state-dependent DM MAC with noncausal CSI available at an ideal cribbing encoder \cite{Bross10}.

State-dependent relay channels are also studied based on knowledge of CSI at the source and/or at the relay. In \cite{KhosraviITW2011} an achievable rate is derived for a case that the source knows the full CSI noncausally and the relay knows the CSI partially.  In \cite{abdellatif2009lower,zaidi2010cooperative,ZaidiIT2012,nasiri2013state} lower and upper bounds are derived for the cases of availability of CSI at the source or at the relay.

In this paper we study a two-user state-dependent DM GMAC with correlated states such that each encoder has access noncausally to \emph{partial} CSI. Also motivated by the broadcast nature of the wireless transmission medium, it is assumed that both of encoders receive feedback signals from the channel output and then try to cooperate with each other by use of the feedback signals. A motivating scenario to investigate such a model is the uplink in a wireless cellular system, where some of the users may be capable of sensing the transmissions of other users over the network. These users are sometimes referred to as cognitive users. The unwanted transmitted signals can be modeled as known interference signals (state information) to the cognitive users. Hence, the cognitive users can cooperatively send their information to a destination in order to cope with interfering signals. This model is also useful in implementing user cooperation to realize \emph{virtual} multiple-input multiple-output (MIMO) systems where the conventional collocated MIMO transmission is not feasible. As another example, it is well known that the performance of Long-Term Evolution (LTE) systems can be improved by use of CSI at the transmitters. Moreover by use of the cooperative strategy at the uplink of the LTE, throughput and reliability will be improved thanks to the diversity gain offered by user cooperation. The state-dependent DM GMAC with correlated states has application also in modeling ad-hoc networks. In the ad-hoc network, the nodes form a network based on the CSIT. Then, two nodes with good intermediate link cooperate to send their information to a receiver.

Considering the state-dependent GMAC, we present an achievable rate region using rate-splitting, block-Markov encoding (BME), GPC, superposition encoding, \emph{partial decode-and-forward} at the encoders and \emph{backward decoding} at the receiver. Our proposed achievable rate region includes several known results as special cases reported in \cite{willemsThesis,Willems85,cover79,KhosraviITW2011,EmadiIET2012,Bross10,abdellatif2009lower,zaidi2010cooperative,ElgamalAref82,PhilosofIT09}. In particular, for a Gaussian model, we prove that if full CSI is available at both of the encoders, the effect of the interference is \emph{completely} removed by using \emph{multi-layer} Costa precoding. Moreover, we prove that if at least one of the users knows the full CSI, there exists an achievable rate region which is independent of the interference power. Our results shed further light on fundamental limits of cooperative strategies between users in a state-dependent wireless
channels, which can be used as guidelines in implementation of future cooperative communication systems.

\emph{Organization:} The remaining part of the paper is organized as follows. The state-dependent DM GMAC is introduced in Section II. The proposed achievable rate region along with special cases, encoding and decoding strategies are discussed in Section III. The Gaussian GMAC with additive interferences that are partially known at the encoders is studied in Section IV. The paper is concluded in Section V.

\emph{Notations:} We use uppercase and lowercase letters to denote random variables and their realizations, respectively. The probability of an event $\mathcal{A}$ is denoted by $\pr(\mathcal{A})$ and the conditional probability of $\mathcal{A}$ given $\mathcal{B}$ is denoted by $\pr(\mathcal{A}|\mathcal{B})$, and  $p_{Y|X}(y|x)$ denotes a collection of conditional probability mass functions (pmfs) on $Y$, one for every $x$. We use boldface letters to denote a vector of length $n$; e.g., $\bx=(x_1,x_2,...,x_n)$ and $\bx_i^j=(x_i,x_{i+1},...,x_j)$ for $i\leq j$.
\label{channelmodel}
\begin{flushleft}
\begin{figure}[ut]
\centering
\psfrag{M1}[][][.85]{$m_1$}
\psfrag{M2}[][][.85]{$m_2$}
\psfrag{S1}[][][.85]{$(\bs_0,\bs_1)$}
\psfrag{S2}[][][.85]{$(\bs_0,\bs_2)$}
\psfrag{s}[][][.85]{$(\bs_0,\bs_1,\bs_2)$}
\psfrag{x1}[][][.85]{$\bx_1$}
\psfrag{x2}[][][.85]{$\bx_2$}
\psfrag{y1}[][][.85]{$\by_1$}
\psfrag{y2}[][][.85]{$\by_2$}
\psfrag{y3}[][][.85]{$\by_3$}
\psfrag{M12}[][][.85]{$(\hat{m}_1,\hat{m}_2)$}
\psfrag{P}[][][.87]{$\prod_{t=1}^{n} p(y_{1,t},y_{2,t},y_{3,t}|x_{1,t},x_{2,t},s_{0,t},s_{1,t},s_{2,t})$}
\psfrag{ps}[][][.87]{$\prod_{t=1}^{n}p(s_{0,t})p(s_{1,t}|s_{0,t})p(s_{2,t}|s_{0,t})$}
\psfrag{Encoder 1}[][][0.9]{Encoder 1}
\psfrag{Encoder 2}[][][.9]{Encoder 2}
\psfrag{Decoder}[][][.9]{Decoder}
\includegraphics[scale=0.32]{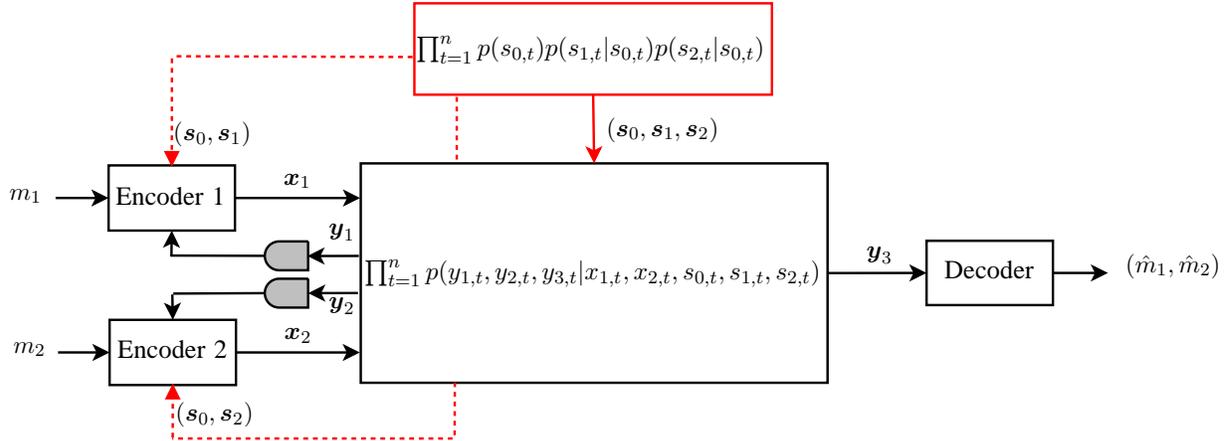}
\caption{\small The two-user state-dependent GMAC. The encoders receive a feedback from the channel with one channel use delay, i.e., strictly causal feedback, and know the partial channel state noncausally.}
\label{fig:cnlmdl}
\vspace{-.7cm}
\end{figure}
\end{flushleft}
\vspace{-1cm}
\section{Channel Model}
The state-dependent GMAC which is depicted in Fig. 1, is denoted by the triple
\begin{equation*}
(\mathcal{X}_1\times\mathcal{X}_2\times\mathcal{S}_0\times\mathcal{S}_1\times\mathcal{S}_2,p_{Y_1Y_2Y_3|X_1X_2S_0S_1S_2}(y_1,y_2,y_3|x_1,x_2,s_0,s_1,s_2),\mathcal{Y}_1\times\mathcal{Y}_2\times\mathcal{Y}_3)
\end{equation*}
where $x_k \in \mathcal{X}_k$ and $y_k \in \mathcal{Y}_k$, for $k\in\{1,2\}$, denote the transmitted and received symbols at the $k$th encoder, respectively; $s_0 \in \mathcal{S}_0$ and $s_k \in \mathcal{S}_k$ denote the partial CSI at the $k$th encoder; and $y_3 \in \mathcal{Y}_3$ denotes the received symbol at the decoder. We assume that the statistics of the channel
are controlled by random parameters $s_0,~s_1$ and $s_2$, whose partial
knowledge is \emph{noncausally} available at the encoders. That is,
encoder $k$ knows \emph{noncausally} the state-sequence pair $(\bs_0,\bs_k)$. The
interaction among the channel's inputs and outputs of the
channel, over $n$ channel uses, is governed by the
conditional pmf \setlength{\arraycolsep}{0.0em}
\begin{align}
\label{eq: chnlmodel}
\nonumber
p(\by_1,\by_2,\by_3|\bx_1,\bx_2,\bs_0,\bs_1,\bs_2)=\prod_{t=1}^n p_{Y_1Y_2Y_3|X_1X_2S_0S_1S_2}(y_{1,t},y_{2,t},y_{3,t}|x_{1,t},x_{2,t},s_{0,t},s_{1,t},s_{2,t}).
\end{align}
\setlength{\arraycolsep}{5pt}
The $n$-extension of pmf of the channel state is as follows
\begin{equation}
\nonumber
p(\bs_0,\bs_1,\bs_2)=\prod_{t=1}^n p_{S_0}(s_{0,t})p_{S_1|S_0}(s_{1,t}|s_{0,t})p_{S_2|S_0}(s_{2,t}|s_{0,t}).
\end{equation}

A length-$n$ code $\mathcal{C}^n(R_{1},R_{2})$ for the state-dependent DM GMAC (where $R_1$ and $R_2$ are reliable transmission rate of user 1 and 2, respectively) consists of
\begin{itemize}
  \item two independent message sets $\mathcal{M}_k=[1:2^{nR_k}]$, for $k\in\{1,2\}$;
  \item encoding functions ${\varphi_{k,t} (.)}$ for $t\in[1:n] \text{ and } k \in\{1,2\}$ which are defined as
\setlength{\arraycolsep}{0.0em}
\begin{eqnarray*}
\varphi_{k,t}&:&~\mathcal{M}_k \times \mathcal{S}_0^n \times \mathcal{S}_k^n \times \mathcal{Y}_k^{t-1} \longrightarrow \mathcal{X}_k,\\
x_{k,t}&=&\varphi_{k,t}(m_k,\bs_0,\bs_1,\by_{k,1}^{t-1})~:\by_{k,1}^{t-1}=(y_{k,1},y_{k,2},...,y_{k,t-1});
\end{eqnarray*}
  \item a decoding function $\psi$ at the receiver to estimate the transmitted messages as follows
  \begin{eqnarray*}
\psi:~\mathcal{Y}_3^n \longrightarrow \mathcal{M}_1 \times \mathcal{M}_2~:~~~~~(\hat{m}_1,\hat{m}_2)=\psi(\by_3).
\end{eqnarray*}
\end{itemize}
The average error probability of a given code $\mathcal{C}^n(R_{1},R_{2})$ is defined as
\begin{eqnarray*}
&&\pr_e^{(n)}=\frac{1}{2^{n(R_1+R_2)}} \times \sum_{(m_1,m_2)\in \mathcal{M}_1\times \mathcal{M}_2}\pr\big\{\psi(\by_3)\neq(m_1,m_2)\big| m_1 \text{ and } m_2 \text{ are sent}\big\}.
\end{eqnarray*}
A rate pair $(R_1,R_2)$ is said to be achievable for the DM state-dependent GMAC if there exists a code sequence $\mathcal{C}^n (R_1,R_2)$ such that $\pr_e^{(n)}\rightarrow0$ as $n\rightarrow\infty$.
\section{Achievable Rate Region}
\label{achivblrate}
We next present the achievable rate region corresponding to our
proposed scheme.  Our protocol is constructed using rate splitting
(i.e. multi-layer encoding), block Markov encoding and
Gelfand--Pinsker multicoding. Each user via the feedback link
\emph{partially} decodes the transmitted message of the other user and
collaborates in the next block via a block Markov strategy and
superposition coding. The users employ Gelfand--Pinsker coding to
utilize their partial channel state information.
\begin{theorem}\label{th:theorem1}
\emph{For the two-user state-dependent DM GMAC with correlated states, where $(S_0,S_k)$ is \emph{noncausally} known at encoder $k$ for $k\in \{1,2\}$,
the closure of the convex hull of the set
\begin{align}\label{eq: theorem1}\nonumber
\mathcal{R}:=\Big\{\big(R_1,R_2\big):& R_1=R_{12}+R_{13},~R_2=R_{21}+R_{23},\\\nonumber
 &0 \leq  R_{12} < I(V_1;Y_2|S_0S_2UX_2)-I(V_1;S_1|S_0U),\\\nonumber
 &0 \leq  R_{21} < I(V_2;Y_1|S_0S_1UX_1)-I(V_2;S_2|S_0U),\\\nonumber
 &0 \leq  R_{13} < I(V_{13};Y_3|UV_1V_2V_{23})-I(V_{13};S_0S_1|UV_1V_2V_{23})+\delta_1^-,\\\nonumber
 &0 \leq  R_{23} < I(V_{23};Y_3|UV_1V_2V_{13})-I(V_{23};S_0S_2|UV_1V_2V_{13})+\delta_2^-,\\\nonumber
 &R_{13}+R_{23}  < I(V_{13}V_{23};Y_3|UV_1V_2)-I(V_{13}V_{23};S_0S_1S_2|UV_1V_2)+\Delta^-,\\
 &R_{12}+R_{13}+R_{21}+R_{23} < I(UV_1V_2V_{13}V_{23};Y_3)-I(UV_1V_2V_{13}V_{23};S_0S_1S_2)\Big\}~~~~
\end{align}
is achievable for a pmf of the form given in (\ref{eq: pmf}). Since the rate-splitting technique is used, for $k \in \{1,2\}$, $R_{k,3-k}$ and $R_{k3}$ denote the splitted rates of $k$th user's message. The auxiliary random variables $U, V_k$, and $V_{k3}$, are with finite alphabets $\mathcal{U},\mathcal{V}_k$, and $\mathcal{V}_{k3}$, respectively, with the following bounded cardinalities;
\begin{align*}
 |\mathcal{U}|& \leq |\mathcal{S}_0||\mathcal{S}_1||\mathcal{S}_2||\mathcal{X}_1||\mathcal{X}_2|+8,\\
 |\mathcal{V}_k|& \leq |\mathcal{U}||\mathcal{S}_0||\mathcal{S}_1||\mathcal{S}_2||\mathcal{X}_1||\mathcal{X}_2|+8,\\
 |\mathcal{V}_{k,3}|& \leq |\mathcal{U}||\mathcal{V}_k||\mathcal{S}_0||\mathcal{S}_1||\mathcal{S}_2||\mathcal{X}_1||\mathcal{X}_2|+4.
\end{align*}
Also, parameters $\Delta^-$, $\delta_k^-$ for $k \in \{1,2\}$ and the pmf are given as follows}
\begin{align*}
\Delta^- &:= \min\big\{0,\Delta_1,\Delta_2,\Delta_3\big\},\\
  \Delta_k &:=  I(V_k;Y_3|UV_{3-k})-I(V_k;S_0S_k|UV_{3-k}),~~\Delta_3 :=  I(V_1V_2;Y_3|U)-I(V_1V_2;S_0S_1S_2|U),\\\nonumber
 \delta_k^- &:=  \min \big\{0,I(V_k;Y_3|UV_{3-k}V_{3-k,3})-I(V_k;S_0S_k|UV_{3-k}V_{3-k,3})\big\},
\end{align*}
\begin{eqnarray}\label{eq: pmf}\nonumber
&&p_{S_0S_1S_2UV_1V_2V_{13}V_{23}X_1X_2Y_1Y_2Y_3} =\\
&& p_{S_0}p_{S_1|S_0}p_{S_2|S_0}p_{U|S_0} p_{V_1|S_0S_1U}p_{V_2|S_0S_2U}
 p_{V_{13}X_1|UV_1S_0S_1}p_{V_{23}X_2|UV_2S_0S_2}p_{Y_1Y_2Y_3|X_1X_2S_0S_1S_2}.
\end{eqnarray}
\end{theorem}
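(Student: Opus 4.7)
The plan is to prove achievability via a block Markov scheme over $B$ blocks of length $n$, giving total blocklength $nB$; fresh messages are sent in the first $B-1$ blocks while block $B$ transmits dummy messages, so the rate penalty vanishes as $B\to\infty$. I will split each user's message as $m_k=(m_{k,3-k},m_{k3})$ with rates $(R_{k,3-k},R_{k3})$: the piece $m_{k,3-k}$ is the cooperative part that user $k$ wishes the \emph{other} encoder to recover through its feedback link in order to help forward in the next block, while $m_{k3}$ is the private part sent directly to the destination. The auxiliary $U$ in block $b$ will index the cooperative pair $(m_{1,2}^{(b-1)},m_{2,1}^{(b-1)})$ on which both encoders agreed after block $b-1$; $V_k$ and $V_{k3}$ are the Gelfand--Pinsker codewords that let encoder $k$ pre-cancel the portion $(S_0,S_k)$ of the state it sees noncausally, carrying $m_{k,3-k}^{(b)}$ and $m_{k3}^{(b)}$ respectively.

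Next I would build the codebook blockwise: generate $2^{n(R_{12}+R_{21})}$ i.i.d.\ sequences $u^n$ from $\prod p_U$; conditioned on each $u^n$, generate $2^{n(R_{k,3-k}+I(V_k;S_0S_k|U)+\epsilon)}$ sequences $v_k^n$ i.i.d.\ from $\prod p_{V_k|U}$ and partition them into $2^{nR_{k,3-k}}$ equal-sized bins; conditioned on each $(u^n,v_k^n)$, generate $2^{n(R_{k3}+I(V_{k3};S_0S_k|UV_k)+\epsilon)}$ pairs $(v_{k3}^n,x_k^n)$ i.i.d.\ from $\prod p_{V_{k3}X_k|UV_kS_0S_k}$ and partition them into $2^{nR_{k3}}$ bins. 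For encoding in block $b$, given $(s_0^n,s_k^n)$ and the cooperative pair recovered from block $b-1$ (which selects $u^n$), encoder $k$ picks $v_k^n$ from the bin indexed by $m_{k,3-k}^{(b)}$ that is jointly typical with $(u^n,s_0^n,s_k^n)$, then a pair $(v_{k3}^n,x_k^n)$ from the bin indexed by $m_{k3}^{(b)}$ that is jointly typical with $(u^n,v_k^n,s_0^n,s_k^n)$; the bin sizes were chosen precisely so the covering lemma guarantees both searches succeed with high probability. The channel input is the second component of the selected pair.

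Then I would analyze the two decoding stages. At encoder $3-k$, using its feedback $y_{3-k}^{(b)}$ together with its own $(u^n,x_{3-k}^n,s_0^n,s_{3-k}^n)$, it looks for the unique bin whose $v_k^n$ is jointly typical with these; a standard packing-lemma calculation gives $R_{k,3-k}<I(V_k;Y_{3-k}|UX_{3-k}S_0S_{3-k})-I(V_k;S_k|S_0U)$. At the destination I apply backward decoding from block $B$ down to block $1$: inductively $(m_{1,2}^{(b)},m_{2,1}^{(b)})$ is already known from decoding block $b+1$, so the decoder seeks the unique $(m_{1,2}^{(b-1)},m_{2,1}^{(b-1)},m_{13}^{(b)},m_{23}^{(b)})$ for which the induced tuple $(u^n,v_1^n,v_2^n,v_{13}^n,v_{23}^n)$ is jointly typical with $y_3^{(b)}$. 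Enumerating error events by which subset of these four indices is in error produces the MAC-type bounds: the single-rate bound on $R_{k3}$ comes from only $m_{k3}$ being wrong, the pairwise bound on $R_{13}+R_{23}$ from both $m_{13},m_{23}$ wrong, and the overall sum bound from all four wrong.

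The main technical obstacle will be the bookkeeping that produces the correction terms $\delta_k^-$ and $\Delta^-$. These arise from intermediate error events in which a cooperative message $m_{12}$ or $m_{21}$ is wrong simultaneously with $m_{k3}$ or with $(m_{13},m_{23})$: when a cooperative index is wrong the associated $u^n$ and the corresponding $v_j^n$ codewords become freshly random, so the effective mutual information exploited by the joint typicality test changes, and the tightest of these alternate chains turns out to be $\Delta_1,\Delta_2,\Delta_3$ (respectively the inner expression of $\delta_k^-$), giving the $\min\{0,\cdot\}$ adjustments appearing in the statement. After verifying each such error probability tends to zero for every rate tuple strictly inside $\mathcal{R}$, the cardinality bounds on $\mathcal{U},\mathcal{V}_k,\mathcal{V}_{k3}$ follow from a standard support-lemma argument applied while preserving the factorization in~\eqref{eq: pmf} together with the relevant mutual-information functionals. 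Finally letting $n\to\infty$, then $B\to\infty$, and time-sharing yields achievability of every rate pair in the closure of the convex hull of $\mathcal{R}$.
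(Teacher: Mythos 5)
Your overall architecture matches the paper's: rate splitting, block Markov encoding over $B$ blocks, Gelfand--Pinsker multicoding layered under superposition coding, partial decoding at each encoder, and backward decoding at the destination, with the packing-lemma bounds you quote for the encoder-side decoding agreeing with the paper's. However, there is a genuine gap in your codebook construction: you generate only $2^{n(R_{12}+R_{21})}$ sequences $u^n$ i.i.d.\ from $\prod p_U$, with no Gelfand--Pinsker covering layer for $U$. The theorem's pmf is $p_{U|S_0}$, i.e.\ the cooperation codeword is allowed to depend on the common state, and the rate bounds charge the corresponding penalty (e.g.\ the sum-rate bound subtracts $I(UV_1V_2V_{13}V_{23};S_0S_1S_2)$, which contains $I(U;S_0)$). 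The paper generates $2^{n(R_{12}+R_{21})}\times J_0$ codewords with $J_0=2^{n[I(U;S_0)+\epsilon]}$ and has each encoder select the smallest index $j_0$ making $u^n$ jointly typical with $s_0^n$ (both encoders agree on this index because $S_0$ is common to both). Without this covering layer your scheme only realizes $U$ independent of $S_0$ and does not achieve the region as stated. A related, smaller slip: you propose drawing the pairs $(v_{k3}^n,x_k^n)$ from $p_{V_{k3}X_k|UV_kS_0S_k}$ at codebook-generation time, which is impossible since the codebook must be fixed before the state realization is known; the paper generates $v_{k3}^n$ from $p_{V_{k3}|UV_k}$ and only draws $x_k^n$ at encoding time from $p_{X_k|S_0S_kUV_kV_{k3}}$.

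Second, your account of where $\delta_k^-$ and $\Delta^-$ come from would not survive a careful error analysis. You attribute them to events in which ``a cooperative message $m_{12}$ or $m_{21}$ is wrong simultaneously with $m_{k3}$,'' but under backward decoding $m_{12}^{(b)}$ and $m_{21}^{(b)}$ are already known when block $b$ is decoded (they were recovered from block $b+1$), so they cannot be in error there; an error in the \emph{previous} block's cooperative pair is exactly the all-indices-wrong event that yields the overall sum-rate bound. The correction terms actually arise from the events in which the within-bin Gelfand--Pinsker covering indices $j_1$ and/or $j_2$ of the (correctly binned) codewords $V_1,V_2$ are in error jointly with $m_{13}$ and/or $m_{23}$ --- the paper's events $E^b_{3,4}$ through $E^b_{3,11}$ --- and this is why the inner expressions of $\delta_k^-$ and $\Delta_1,\Delta_2,\Delta_3$ have the form $I(V_k;Y_3|U\,\cdot)-I(V_k;S_0S_k|U\,\cdot)$: they are the cost of re-identifying the covering index of $V_k$ at the destination. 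You cite the correct final expressions, but the enumeration of error events you describe (subsets of the four message indices only) would miss precisely these events and hence would not produce the $\min\{0,\cdot\}$ adjustments.
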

\begin{remark}\emph{The proposed rate region in Theorem \ref{th:theorem1} includes several previously known special cases as summarized in Table~\ref{tabel: specialcases}. For each case, the pmf is specialized and auxiliary random variables with respect to the channel model in Fig.~\ref{fig:cnlmdl} and Theorem \ref{th:theorem1} are given.}
\end{remark}
\begin{table*}[!ht]\footnotesize
\renewcommand{\arraystretch}{1.5}
\caption{\small Special cases of the proposed achievable rate region in Theorem \ref{th:theorem1}.}\label{tabel: specialcases}
\begin{center}
\begin{tabular}{cc|c|c|l|}\cline{2-4}
\multicolumn{1}{ c| }{} & \cellcolor [gray]{.9} Channel model & \cellcolor [gray]{.9} pmf & \cellcolor [gray]{.9} Auxiliary random variables \\ \cline{1-4}

\multicolumn{1}{ |c| }{}
& GMAC \cite[Theorem 7.1]{willemsThesis} & $p_{Y_1Y_2Y_3|X_1X_2}$ & $V_{13}=X_1$, $V_{23}=X_2$  \\ \cline{2-4}

\multicolumn{1}{ |c| }{ \begin{tabular}{@{}c@{}} DM channels \\ without states: \\$\mathcal{S}_0=\mathcal{S}_1=\mathcal{S}_2=\emptyset$ \end{tabular} }

& \begin{tabular}{@{}c@{}} MAC with ideal strictly \\causal cribbing encoders \\ \cite[Theorem 5]{Willems85}\end{tabular} &
\begin{tabular}{@{}c@{}} $p_{Y_3|X_1X_2},$\\ $_{Y_1=X_2, Y_2=X_1}$ \end{tabular}&
\begin{tabular}{@{}c@{}}$\mathcal{V}_{13}=\mathcal{V}_{23}=\emptyset,$ \\ $V_1=X_1, V_2=X_2$\end{tabular}  \\ \cline{2-4}

\multicolumn{1}{ |c| }{}
&  \begin{tabular}{@{}c@{}} Relaying via partial \\ decode-and-forward \cite{cover79,ElgamalAref82}\end{tabular}&
\begin{tabular}{@{}c@{}} $p_{Y_2Y_3|X_1X_2},$ \\ $_{\mathcal{Y}_1=\emptyset}$ \end{tabular}&
\begin{tabular}{@{}c@{}} $\mathcal{M}_2=\mathcal{V}_2=\mathcal{V}_{23}=\emptyset$ \\ $U=X_2,V_{13}=X_1$ \end{tabular}    \\ \hline\hline

\multicolumn{1}{ |c| }{}&
\begin{tabular}{@{}c@{}} Partial CSIT at\\ encoders \cite[Equation (30)]{PhilosofIT09} \end{tabular} &
\begin{tabular}{@{}c@{}} $p_{Y_3|X_1X_2S_0S_1S_2},$\\ $_{\mathcal{Y}_1=\mathcal{Y}_2=\emptyset}$ \end{tabular}&
$\mathcal{U}=\mathcal{V}_1=\mathcal{V}_2=\emptyset$   \\ \cline{2-4}

\multicolumn{1}{ |c| }{\begin{tabular}{@{}c@{}} DM state--dependent\\MAC channels \end{tabular}} &
\begin{tabular}{@{}c@{}} One ideal cribbing\\ and informed encoder \cite{Bross10} \end{tabular} &
\begin{tabular}{@{}c@{}} $p_{Y_3|X_1X_2S_2},$ \\ $_{\mathcal{S}_0=\mathcal{S}_1=\mathcal{Y}_1=\emptyset, Y_2=X_1}$ \end{tabular}&
\begin{tabular}{@{}c@{}} $\mathcal{V}_2=\mathcal{V}_{13}=\emptyset$ \\ $V_1=X_1$\end{tabular} \\ \cline{2-4}

\multicolumn{1}{ |c| }{}  &
\begin{tabular}{@{}c@{}} Ideal cribbing encoders\\ and partial CSIT \cite[Theorem 2]{EmadiIET2012} \end{tabular} &
\begin{tabular}{@{}c@{}} $p_{Y_3|X_1X_2S_0S_1S_2},$ \\ $_{Y_1=X_2, Y_2=X_1}$ \end{tabular}&
\begin{tabular}{@{}c@{}} $\mathcal{V}_{13}=\mathcal{V}_{23}=\emptyset$  \end{tabular} \\ \hline\hline


\multicolumn{1}{ |c| }{}&
\begin{tabular}{@{}c@{}} CSIT at the\\ source \cite[Theorem 1]{abdellatif2009lower} \end{tabular} &
\begin{tabular}{@{}c@{}} $p_{Y_2Y_3|X_1X_2S_1}$,\\ $_{\mathcal{S}_0=\mathcal{S}_2=\emptyset}$ \end{tabular} &
\begin{tabular}{@{}c@{}} $\mathcal{V}_2=\mathcal{V}_{23}=\emptyset$ \\ $U=X_2$ \end{tabular} \\ \cline{2-4}

\multicolumn{1}{ |c| }{\begin{tabular}{@{}c@{}} DM state-dependent\\ relay channel:\end{tabular}} &
\begin{tabular}{@{}c@{}} CSIT at the\\  relay \cite[Theorem 1]{zaidi2010cooperative} \end{tabular} &
\begin{tabular}{@{}c@{}} $p_{Y_2Y_3|X_1X_2S_2},$\\ $_{\mathcal{S}_0=\mathcal{S}_1=\emptyset}$ \end{tabular} &
\begin{tabular}{@{}c@{}} $\mathcal{V}_{13}=\mathcal{V}_2=\emptyset$ \\ $V_1=X_1$ \end{tabular} \\ \cline{2-4}

\multicolumn{1}{ |c| }{\begin{tabular}{@{}c@{}} encoder 2 is a relay,\\ i.e., $\mathcal{M}_2=\mathcal{Y}_1=\emptyset$\end{tabular}}  &
\begin{tabular}{@{}c@{}} degraded CSITs at the\\ source and relay \cite[Theorem 7]{KhosraviITW2011} \end{tabular} &
\begin{tabular}{@{}c@{}} $p_{Y_2Y_3|X_1X_2S_0S_1},$\\$_{\mathcal{S}_2=\emptyset}$ \end{tabular} &
\begin{tabular}{@{}c@{}} $\mathcal{V}_2=\mathcal{V}_{13}=\mathcal{V}_{23}=\emptyset$ \end{tabular} \\ \cline{2-4}

\multicolumn{1}{ |c| }{} &
\begin{tabular}{@{}c@{}} Full CSIT at the\\ source and relay  \cite[Proposition 4]{KhosraviITW2011} \end{tabular} &
\begin{tabular}{@{}c@{}} $p_{Y_2Y_3|X_1X_2S_0},$ \\$_{\mathcal{S}_1=\mathcal{S}_2=\emptyset}$ \end{tabular} &
\begin{tabular}{@{}c@{}} $\mathcal{V}_2=\mathcal{V}_{23}=\emptyset$ \end{tabular} \\ \cline{1-4}
\end{tabular}
\end{center}
\vspace{-.8cm}
\end{table*}
\begin{proof}[Proof of Theorem \ref{th:theorem1}]
  Since the two users have the opportunity to overhear a noisy
  version of each other's signal, they set up a cooperation protocol
  in transmitting their messages to the receiver. To furnish the
  cooperation, we let each user \emph{partially} decodes the other
  user's transmitted message, facilitating coherent transmission of
  part of their messages. The other parts of the messages are
  transmitted directly to the receiver. Hence, the $k$th encoder
  splits its message into two parts as $m_k=(m_{k,3-k},m_{k3})$, drawn
  uniformly from the sets $[1:2^{nR_{k,3-k}}]\times [1:2^{nR_{k3}}]$,
  where $R_k=R_{k,3-k}+R_{k3}$, for $k\in\{1,2\}$. Thus, $m_{k,3-k}$
  is transmitted to the receiver with the help of user $(3-k)$ by
  use of BME combined with the GPC technique and $m_{k3}$ denotes the
  message that is transmitted directly to the receiver by use of
  superposition coding and the GPC technique.

  Each user transmits $B$ blocks of length $n$ with large enough
  block length to ensure reliable decoding at the other user and the
  receiver. User $k$ transmits a sequence of $B-1$ messages over
  $B$ blocks. Therefore, for a fixed $n$, the effective transmitted
  rate is $\big((R_{k,3-k}+R_{k3})(B-1)/B\big)$, that approaches to $R_k$ as
  $B$ tends to infinity. Moreover, the receiver uses backward
  decoding technique \cite{willemsThesis} to decode the transmitted
  messages. Although one drawback of backward decoding is that the receiver needs to wait $B$
  blocks, the rate region is larger than that corresponding to low-delay sliding window decoding
  \cite{carleial1982multiple}. In the following for $k\in\{1,2\}$ and for each block $b \in [1:B]$,
  we explain our proposed three-layer code construction at the $k$th
  encoder.
\begin{figure*}[!t]
\centering
\psfrag{d12}[][][0.6]{$\ddots$}
\psfrag{d}[][][0.6]{$\vdots$}
\psfrag{d2}[][][0.6]{$\cdots$}
\psfrag{1}[][][0.85]{$1$}
\psfrag{2}[][][0.85]{$2$}
\psfrag{3}[][][0.85]{$\doublehat{m}^b_c$}
\psfrag{5}[][][0.85]{$m^b_{12}$}
\psfrag{6}[][][0.85]{$2^{nR_{12}}$}
\psfrag{7}[][][0.85]{$m^b_{13}$}
\psfrag{8}[][][0.85]{$2^{nR_{13}}$}
\psfrag{4}[][][0.85]{$2^{n(R_{12}+R_{21})}$}
\psfrag{s0}[][][0.85]{$\bs_0$}
\psfrag{s1}[][][0.85]{$(\bs_0,\bs_1)$}
\psfrag{u}[][][0.85]{$\bu(\doublehat{m}^b_c,j^*_0)$}
\psfrag{v1}[][][0.85]{$\bv_1(\doublehat{m}^b_c,j^*_0,m^b_{12},j^*_{1})$}
\psfrag{v3}[][][0.85]{$\bv_{13}(\doublehat{m}^b_c,j^*_0,m^b_{12},j^*_{1},m^b_{13},j^*_{13})$}
\includegraphics[scale = 0.35]{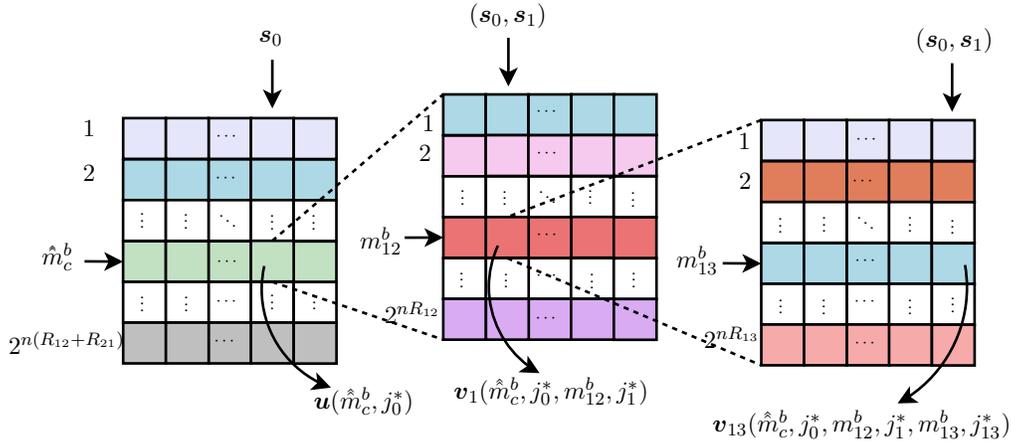}
\caption{\small Illustration of the three-layer codebook generation at encoder 1. It is shown that for each layer how codebooks are generated and how codewords are chosen based on the respective message and given CSIT by the GPC technique. For each layer, a row indicates a bin which is determined by the message, and then for a selected bin, a codeword is chosen which is jointly typical with the given CSIT.}
\label{fig:codebook}
\vspace{-.7cm}
\end{figure*}

  \textbf{Codebook generation}: For a fixed pmf of the form given in
  (\ref{eq: pmf}) and a given $\epsilon>0$, for $k\in\{1,2\}$ let
\begin{align*}
  J_0&=2^{n[I(U;S_0)+\epsilon]},~~ J_{k}=2^{n[I(V_k;S_0S_k|U)+\epsilon]},~~J_{k3}=2^{n[I(V_{k3};S_0S_k|UV_k)+\epsilon]}.
\end{align*}
\begin{itemize}
  \item \emph{Transmit old information coherently}:  Generate $2^{n(R_{12}+R_{21})}\times J_0$ codewords $\bu$, each with probability $p(\bu)=\prod_{t=1}^{n}{p_{U}(u_{t})}$. Label them as $\bu(m_c,j_0)$ for $m_c\in[1:2^{n(R_{12}+R_{21})}]$ and $j_0\in[1:J_0]$.
  \item \emph{Superimpose fresh information}: For each pair $\bu(m_c,j_0)$, $k$th user generates $2^{nR_{k,3-k}}\times J_k$ codewords $\bv_k$, each with probability $p(\bv_k|\bu)=\prod_{t=1}^{n}{p_{V_k|U_0}(v_{k,t}|u_{0,t})}$. Label them as $\bv_k(m_c,j_0,m_{k,3-k},j_k)$ where $m_{k,3-k}\in [1:2^{nR_{k,(3-k)}}]$ and $j_k\in [1:J_k]$.
  \item \emph{Direct transmission to the receiver}: User $k$, for each pair of codewords $\big(\bu(m_c,j_0),\bv_k(m_c,j_0,m_{k,3-k},j_k)\big)$ generates $2^{nR_{k,3}}\times J_{k3}$ codewords $\bv_{k3}$ each with probability $p(\bv_{k3}|\bu,\bv_k)=$ $\prod_{t=1}^{n}{p_{V_{k3}|UV_k}(v_{k3,t}|u_{0,t},v_{k,t})}$. Label them as $\bv_{k3}(m_c,j_0,m_{k,3-k},j_k,m_{k,3},j_{k3})$ where $m_{k,3}\in [1:2^{nR_{k3}}]$ and $j_{k3}\in[1:J_{k3}]$.
\end{itemize}
Encoders 1 and 2, by use of the generated codebooks, the noncausal CSIT and the messages, choose appropriate codewords and generate transmitted signals. In the following, we explain how the encoders generate codewords for block $b$ and then the receiver performs decoding procedure.

\textbf{Generating $\bx_1^b$ at encoder 1:}
Assume that encoder 1, from decoding of the previous block, has already estimated correctly the first part of the transmitted message of encoder 2 as $\doublehat{m}^{b-1}_{21}$. Therefore, encoder 1 transmits the cooperation message $\doublehat{m}_c^b:=(m^{b-1}_{12},\doublehat{m}_{21}^{b-1})$ over $b$th block.
\begin{itemize}
  \item \emph{Common codeword}: By use of $\doublehat{m}_c^b$ and known common CSIT $\bs^b_0$, encoder 1 searches for smallest $j_0$ such that $\big(\bu(\doublehat{m}_c^b,j_0),\bs^b_0\big)\in \A (p_{US_0})$. Name this $j_0$ as $j^*_0$ which is a function of $(\doublehat{m}_c^b,\bs^b_0)$.

  \item \emph{Superimpose fresh information}: To transmit $m^b_{12}$, for a given $\bu(\doublehat{m}_c^b,j^*_0)$ and known CSIT $(\bs^b_0,\bs^b_1)$, encoder 1 searches for smallest $j_1$ such that $\bv_1(\doublehat{m}_c^b,j^*_0,m^b_{12},j_1)$ is jointly typical with $(\bs^b_0,\bs^b_1,\bu(\doublehat{m}_c^b,j^*_0))$. Name this $j_1$ as $j^*_1$ which is a function of $(\doublehat{m}_c^b,m^b_{12},\bs^b_0,\bs^b_1)$.

  \item \emph{Direct transmission}: To transmit $m^b_{13}$ directly to the receiver, encoder 1 searches for smallest $j_{13}$ such that $\bv_{13}(\doublehat{m}_c^b,j^*_0,m^b_{12},j^*_1,m^b_{13},j_{13})$ is jointly typical with $(\bs^b_0,\bs^b_1,\big(\bu(\doublehat{m}_c^b,j^*_0),\bv_1(\doublehat{m}_c^b,j^*_0,m^b_{12},j^*_1)\big)$. Name this $j_{13}$ as $j^*_{13}$ which is a function of $(\doublehat{m}_c^b,m^b_{12},m^b_{13},\bs^b_0,\bs^b_1)$.

  \item \emph{Transmitted signal}: Finally, for given codewords $\big(\bu(\doublehat{m}_c^b,j^*_0),\bv_1(\doublehat{m}_c^b,j^*_0,m^b_{12},j^*_1),\bv_{13}(\doublehat{m}_c^b,j^*_0,m^b_{12},$ $j^*_1,m^b_{13},j^*_{13})\big)$ and known CSIT $(\bs^b_0,\bs^b_1)$, encoder 1 generates $\bx^b_1$ with independent and identically distributed (i.i.d.) component according to the conditional pmf $p_{X_1|S_0S_1UV_1V_{13}}$. The codebook generation flow of encoder 1 is depicted in Fig.~\ref{fig:codebook}.
\end{itemize}
\hspace{0.13cm}\textbf{Generating $\bx_2^b$ at encoder 2:}
$\bx_2^b$ is generated similar to $\bx_1^b$ by swapping the indices 1 and 2.

Note that in the codebook generation we set $m^0_{12}=m^0_{21}=m^1_{13}=m^1_{23}=m^B_{12}=m^B_{21}=1$.\\
\begin{table*}\footnotesize
\caption{\small Illustration of encoding--decoding flow at encoder 1 and the backward decoding at the receiver. The superscript indices indicate the block number.  }\label{tabel:encoding}
\begin{center}
\begin{tabular}{ >{\columncolor[gray]{.9}}l | c| c | c| c}\hline
\rowcolor [gray]{.7} Block & 1 & 2& $\cdots$ & $B$  \\ \hline
$U_0$ & $\bu^1\big((1,1),j^1_0\big)$ & $\bu^2\big((m^1_{12},\doublehat{m}^1_{21}),j^2_0\big)$& $\cdots$ & $\bu^B\big((m^{B-1}_{12},\doublehat{m}^{B-1}_{21}),j^{B}_0\big)$  \\
$V_1$ & $\bv^1_1\big((1,1),j^1_0,m^1_{12},j^1_1\big)$ & $\bv^2_1\big((m^1_{12},\doublehat{m}^1_{21}),j^2_0,m^2_{12},j^2_1\big)$& $\cdots$ & $\bv^B_1\big((m^{B-1}_{12},\doublehat{m}^{B-1}_{21}),j^B_0,1,j^B_1\big)$ \\
$V_{13}$ & $\bv_{13}\big((1,1),j^1_0,m^1_{12},j^1_1,1,j^1_{13}\big)$ & $\bv^2_{13}\big((m^1_{12},\doublehat{m}^1_{21}),j^2_0,m^2_{12},j^2_1,m^2_{13},j^2_{13}\big)$ & $\cdots$ & $\bv^B_{13}\big((m^{B-1}_{12},\doublehat{m}^{B-1}_{21}),j^B_0,1,j^B_1,m^B_{13},j^B_{13}\big)$ \\
$Y_1$ & $\doublehat{m}^1_{21}\longrightarrow$ & $\doublehat{m}^2_{21}\longrightarrow$ & $\cdots$ & $\emptyset$  \\
$Y_3$ & $\emptyset$ & $\longleftarrow\big(\hat{m}^1_{12},\hat{m}^1_{21},\hat{m}^2_{13},\hat{m}^2_{23}\big)$ & $\cdots$ & $\longleftarrow\big(\hat{m}^{B-1}_{12},\hat{m}^{B-1}_{21},\hat{m}^B_{13},\hat{m}^B_{23}\big)$  \\
\bottomrule
\end{tabular}
\end{center}
\vspace{-.5cm}
\end{table*}
\textbf{{Decoding}:} In addition to the decoding process at the
receiver, each encoder also decodes partially the transmitted message
of the other encoder. All the decoding is based on joint
typicality of sequences which are discussed in the following.
\begin{itemize}
\item \emph{Decoding at encoder 1}: Assume that encoder 1 has received
  $\by^b_1$ for $b\in[1:B-1]$ and $\doublehat{m}^{b-1}_{21}$ is
  already known from the decoding process of the previous block. In
  other words,
  $\doublehat{m}^b_c:=(m^{b-1}_{12},\doublehat{m}^{b-1}_{21})$ is
  available. Given
  $(\bu(\doublehat{m}_c^{b},j^*_0),\bs^b_0,\bs^b_1,\bx^b_1)$, which are
  known codewords at encoder 1, encoder 1 tries to estimate the first part
  of the transmitted message of encoder 2 over $b$th block as
  $\doublehat{m}^b_{21}$. So, encoder 1 checks the following joint
  typicality of sequences to decode the message.
\begin{align}
\label{eq: decodingEn1}
\left(\bu(\doublehat{m}_c^{b},j^*_0),\bv_2(\doublehat{m}_c^{b},j^*_0,m_{21}^{b},j_2),\bs^b_0,\by^b_1,\bs^b_1,\bx^b_1\right)\in \A(UV_2S_0Y_1X_1S_1).
\end{align}
  \item \emph{Decoding at encoder 2}: Decoding is similar to the decoding at encoder 1 by swapping indices.
  \item \emph{Decoding at the receiver}: The receiver uses backward
    decoding \cite{willemsThesis} to decode the transmitted messages,
    i.e., the receiver waits to receive all $B$ blocks, and then
    starts decoding the transmitted messages from the last to the
    first block. In the following, we explain the decoding at block
    $b$. From the decoding of block $b+1$, the receiver already knows
    $\hat{M}^b_{12}:=\hat{m}^b_{12}$ and
    $\hat{M}^b_{21}:=\hat{m}^b_{21}$. Therefore, the receiver
    estimates $m^b_c=(m^{b-1}_{12},m^{b-1}_{21})$, $m^b_{13}$ and
    $m^b_{23}$ using the following joint typicality criterion for
    $b=B,B-1,...,2$;
\begin{equation}
\label{eq: RXdecoding}
E^b_D \in \A(D),
\end{equation}
where
\begin{eqnarray*}
E^b_D &:=& \big(\bu(m_c^b,j_0),\bv_1(m_c^b,j_0,\hat{M}_{12}^{b},j_1),\bv_2(m_c^b,j_0,\hat{M}_{21}^{b},j_2),\bv_{13}(m_c^b,j_0,\hat{M}_{12}^{b},j_1,m^b_{13},j_{13}),\\
&& ~~\bv_{23}(m_c^b,j_0,\hat{M}_{21}^{b},j_2,m^b_{23},j_{23}),\by^b_3 \big),\\
D &:=& (U,V_1,V_2,V_{13},V_{23},Y_3).
\end{eqnarray*}
\end{itemize}
The encoding-decoding flow of encoder 1 and the backward decoding flow at the receiver are summarized in Table~\ref{tabel:encoding}. Analysis of probability of decoding errors are given in Appendix I.
\end{proof}
\label{channelmodel}
\begin{flushleft}
\begin{figure}[!t]
\centering
\psfrag{m1}[][][.85]{$m_1$}
\psfrag{m2}[][][.85]{$m_2$}
\psfrag{m12}[][][.85]{$(\hat{m}_1,\hat{m}_2)$}
\psfrag{s0}[][][.85]{$S_0$}
\psfrag{s1}[][][.85]{$S_1$}
\psfrag{s2}[][][.85]{$S_2$}
\psfrag{x1}[][][.85]{$X_1$}
\psfrag{x2}[][][.85]{$X_2$}
\psfrag{z3}[][][.85]{$Z_3$}
\psfrag{z1}[][][.85]{$Z_1$}
\psfrag{z2}[][][.85]{$Z_2$}
\psfrag{y1}[][][.85]{$Y_1$}
\psfrag{y2}[][][.85]{$Y_2$}
\psfrag{y3}[][][.85]{$Y_3$}
\psfrag{encoder 1}[][][.85]{Encoder 1}
\psfrag{encoder 2}[][][0.85]{Encoder 2}
\psfrag{decoder}[][][0.85]{Decoder}
\includegraphics[scale=0.25]{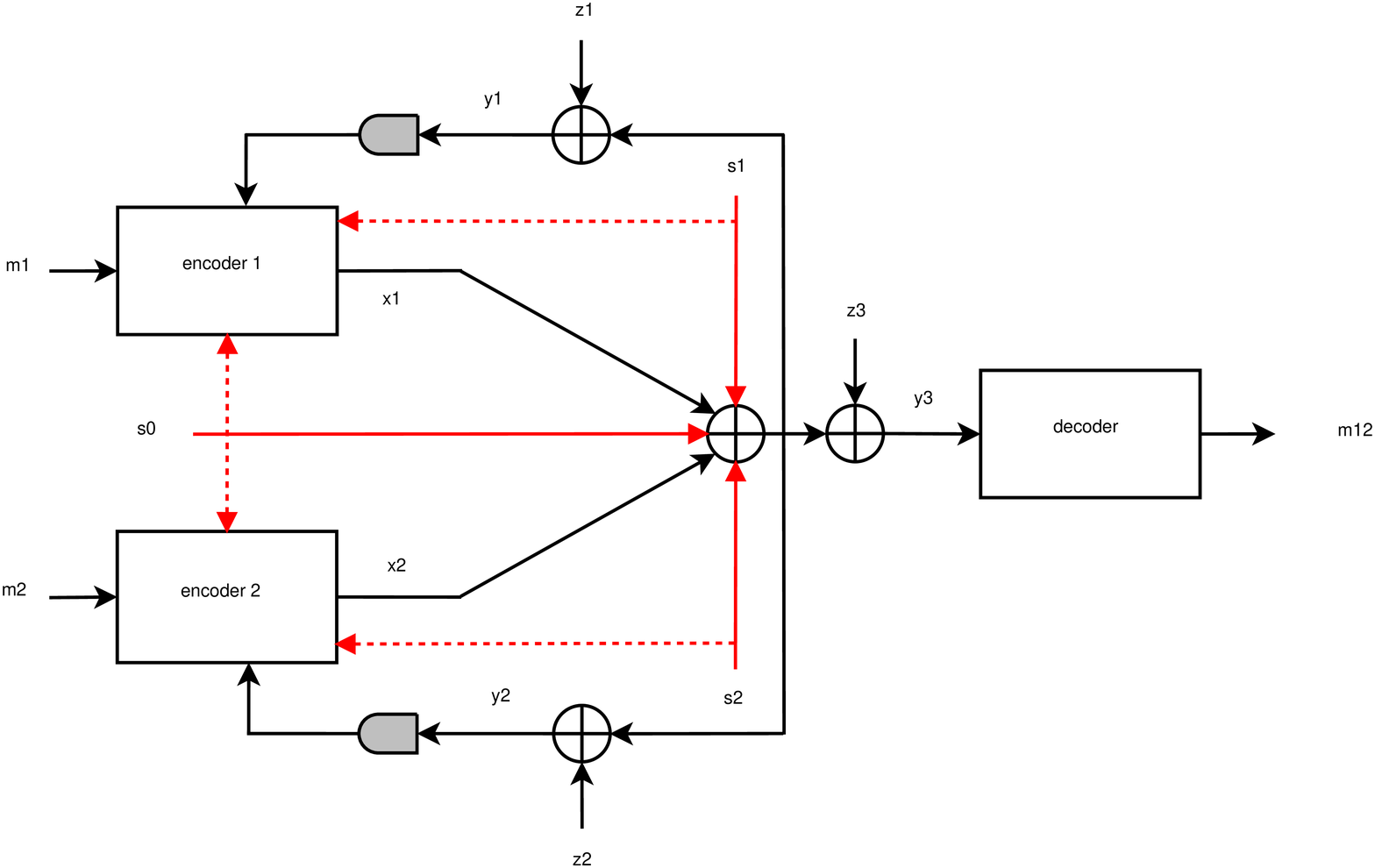}
\caption{\small The two-user Gaussian GMAC with additive interference.}
\label{fig:Gaussiancnlmdl}
\vspace{-0.7cm}
\end{figure}
\end{flushleft}
\section{State-Dependent Gaussian GMAC}
In this section, a dirty GMAC is studied for two scenarios. First, it is assumed that both of encoders know the full CSI, and then it is assumed that each of the them only knows part of the CSI. Assume a dirty GMAC as depicted in Fig.~\ref{fig:Gaussiancnlmdl} which is defined by a linear Gaussian channel model as
\begin{equation}
\label{eq:gaussianmodel}
Y_j=X_1+X_2+S_0+S_1+S_2+Z_j,
\end{equation}where $Y_j$ is the received signal at the node $j$ for $j\in\{1,2,3\}$, which is corrupted by the AWGN noise $Z_j\sim\mathcal{N}(0,N_j)$ and the additive interferences. The common interference signal $S_0$ is noncausally known at both of the encoders, and $S_k$ is only available at the $k$th encoder, for $k\in{1,2}$. We assume that the additive interference $S_0$ has a zero mean Gaussian distribution with a \emph{finite} variance $\mathbb{E}S_0^2 = Q_0$, i.e. $S_0 \sim \mathcal{N}(0,Q_0)$, and the private additive interference is $S_k\sim\mathcal{N}(0,Q_k)$. The additive AWGN noises and interferences are mutually independent. The transmitted signal from the $k$th encoder is denoted by $X_k$ subject to an average power constraint $\mathbb{E}X_k^2\leq P_k$ for $k\in\{1,2\}$.
\subsection{State-Dependent Gaussian GMAC with fully informed encoders}
In this subsection, we consider a dirty Gaussian GMAC in which both of encoders knows the full CSI. That is, $S_1=S_2=\emptyset$ and the linear Gaussian channel model is defined by
\begin{equation}
\label{eq:gaussianmodel2}
Y_j=X_1+X_2+S_0+Z_j.
\end{equation}
\begin{proposition}\label{propos:proposition1} \emph{For the Gaussian GMAC with full CSI at the encoders defined in (\ref{eq:gaussianmodel2}), the rate region
\begin{align}\label{eq: proposition1}\nonumber
\mathcal{R}_G:= \bigcup
\Bigg\{(R_1,R_2): R_1&\leq \c \left(\frac{\bar{\rho}_1P'_1}{\bar{\rho}_1P''_1+N_2}\right)+\c \left(\frac{\bar{\rho}_1P''_1}{N_3}\right),\\\nonumber
R_2&\leq  \c \left(\frac{\bar{\rho}_2P'_2}{\bar{\rho}_2P''_2+N_1}\right)+\c \left(\frac{\bar{\rho}_2P''_2}{N_3}\right),\\\nonumber
R_1+R_2&\leq \c \left(\frac{\bar{\rho}_1P'_1}{\bar{\rho}_1P''_1+N_2}\right)+\c \left(\frac{\bar{\rho}_2P'_2}{\bar{\rho}_2P''_2+N_1}\right)+
\c \left(\frac{\bar{\rho}_1P''_1+\bar{\rho}_2P''_2}{N_3}\right)\\
R_1+R_2&\leq \c \left(\frac{P_1+P_2+2\sqrt{\rho_1\rho_2P_1P_2}}{N_3}\right)\Bigg\},
\end{align}
is achievable, where the union is taken over $\rho_k \in[0,1]$ and $P'_k,P''_k \geq 0$ such that $P'_k+P''_k\leq P_k$ for $k\in\{1,2\}$ and
$\c(x):=\frac{1}{2}\log_2(1+x)$, $\bar{x}:=1-x$.}
\end{proposition}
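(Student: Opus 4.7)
The plan is to specialize Theorem~\ref{th:theorem1} to the Gaussian setting $\mathcal{S}_1=\mathcal{S}_2=\emptyset$ with a jointly Gaussian choice of the auxiliary random variables that realizes a three-layer extension of Costa's dirty-paper coding (cooperative, partial-cooperative and direct layer). First I would decompose each input as
\begin{equation*}
X_k=\sqrt{\rho_k P_k}\,T+A_k+B_k,\qquad k\in\{1,2\},
\end{equation*}
where $T\sim\mathcal{N}(0,1)$ is a shared coherent cooperation signal produced by both encoders from the cooperation message $m_c^b$ through the block-Markov codebook, and $A_k\sim\mathcal{N}(0,\bar{\rho}_k P'_k)$, $B_k\sim\mathcal{N}(0,\bar{\rho}_k P''_k)$ are mutually independent Gaussians conveying the rate-split messages $m_{k,3-k}$ and $m_{k3}$, respectively. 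The total power is $\rho_k P_k+\bar{\rho}_k(P'_k+P''_k)\le P_k$ whenever $P'_k+P''_k\le P_k$.

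Next I would set
\begin{align*}
U&=\bigl(\sqrt{\rho_1P_1}+\sqrt{\rho_2P_2}\bigr)T,\\
V_k&=\sqrt{\rho_k P_k}\,T+A_k,\\
V_{k3}&=B_k+\alpha_k S_0,
\end{align*}
so that only the innermost (direct) layer is dirty-paper bin-coded against $S_0$, with Costa coefficients $\alpha_1,\alpha_2$ to be fixed below. Because $U$ and $V_k$ are independent of $S_0$, the Gelfand--Pinsker penalties $I(U;S_0)$ and $I(V_k;S_0|U)$ vanish; combined with $S_1=\emptyset$, the cross-decoding constraint $R_{12}<I(V_1;Y_2|S_0UX_2)-I(V_1;S_1|S_0U)$ reduces to a clean Gaussian MAC term. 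Conditioning on $S_0$, $U$ (hence $T$) and $X_2$ strips $Y_2$ down to $A_1+B_1+Z_2$, yielding the first summand $\c\!\bigl(\bar{\rho}_1P'_1/(\bar{\rho}_1P''_1+N_2)\bigr)$ of the $R_1$ bound; the symmetric computation produces the first summand of the $R_2$ bound.

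At the receiver, after conditioning on $(U,V_1,V_2)$ one peels off the coherent layer $T$ and the partial-cooperation signals $A_1,A_2$, leaving the doubly dirty Gaussian residual channel $Y_3'=B_1+B_2+S_0+Z_3$ in which $S_0$ is known at both virtual encoders. Choosing $(\alpha_1,\alpha_2)$ via the standard joint Costa MMSE rule for this residual channel cancels $S_0$ completely: the individual differences $I(V_{k3};Y_3|UV_1V_2V_{3-k,3})-I(V_{k3};S_0|UV_1V_2V_{3-k,3})$ each collapse to $\c(\bar{\rho}_k P''_k/N_3)$, the joint difference $I(V_{13}V_{23};Y_3|UV_1V_2)-I(V_{13}V_{23};S_0|UV_1V_2)$ collapses to $\c\!\bigl((\bar{\rho}_1P''_1+\bar{\rho}_2P''_2)/N_3\bigr)$, and the overall bound $I(UV_1V_2V_{13}V_{23};Y_3)-I(UV_1V_2V_{13}V_{23};S_0)$ evaluates to the coherent MAC sum $\c\!\bigl((P_1+P_2+2\sqrt{\rho_1\rho_2P_1P_2})/N_3\bigr)$.

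The main obstacle is the multi-layer dirty-paper bookkeeping: one must verify that a \emph{single} jointly Gaussian distribution with a compatible pair $(\alpha_1,\alpha_2)$ simultaneously cancels $S_0$ in every $I(\cdot;Y_3)-I(\cdot;S_0)$ difference appearing in Theorem~\ref{th:theorem1}, and that the auxiliary quantities $\delta_k^-$ and $\Delta^-$ (which equal individual and joint dirty-paper capacities of Gaussian channels under this choice) are nonnegative so that their $\min\{0,\cdot\}$ truncations are inactive. Once this is done, applying $R_k=R_{k,3-k}+R_{k3}$ and Fourier--Motzkin elimination on the six bounds of Theorem~\ref{th:theorem1} collapses them into the four inequalities of~\eqref{eq: proposition1}, and taking the union over admissible $(\rho_k,P'_k,P''_k)$ completes the proof.
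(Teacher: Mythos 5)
Your overall strategy---specializing Theorem~\ref{th:theorem1} with jointly Gaussian auxiliaries and then running Fourier--Motzkin---is the paper's, and your input decomposition $X_k=\sqrt{\rho_kP_k}\,T+A_k+B_k$ matches (\ref{eq: x_k}). The gap is in your choice of auxiliary random variables: you dirty-paper code only the innermost layer ($V_{k3}=B_k+\alpha_kS_0$) and leave $U$ and $V_k$ free of $S_0$. The paper instead precodes \emph{every} layer, $U=X_u+\alpha_0S_0$, $V_k=\sqrt{\rho_kP_k}X_u+X'_k+\alpha_kS_0$, $V_{k3}=X''_k+\alpha_{k3}S_0$ as in (\ref{eq:txsignals}), with the coefficients (\ref{eq: DPCcoeff}) chosen so that each residual $\mathcal{E}=(\text{layer})-\alpha(X_1+X_2+Z_3)$ is uncorrelated with $X_1+X_2+Z_3$ and independent of $S_0$; this is exactly what makes the whole vector $(U,V_1,V_2,V_{13},V_{23})$ conditionally independent of $S_0$ given $Y_3$, and it is the content of Appendix~II.

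This is not mere bookkeeping: your choice fails on the unconditioned sum-rate constraint of Theorem~\ref{th:theorem1}, namely $R_{12}+R_{13}+R_{21}+R_{23}<I(W;Y_3)-I(W;S_0)=h(W|S_0)-h(W|Y_3)$ with $W:=(U,V_1,V_2,V_{13},V_{23})$. Reaching the clean coherent bound $\c\big((P_1+P_2+2\sqrt{\rho_1\rho_2P_1P_2})/N_3\big)$ requires $h(W|Y_3)=h(W|Y_3,S_0)$, i.e.\ the Markov chain $S_0\leftrightarrow Y_3\leftrightarrow W$. For your $U=(\sqrt{\rho_1P_1}+\sqrt{\rho_2P_2})T=:cT$ one computes $\mathrm{Cov}(U,S_0\mid Y_3)=-c^2Q_0/\mathrm{Var}(Y_3)\neq0$ whenever $\rho_k>0$ and $Q_0>0$ (and similarly for $V_1,V_2$), so the Markov condition fails and the bound falls strictly below the target by a $Q_0$-dependent amount: the cooperation and partial-cooperation layers see $S_0$ as untreated interference at the receiver, exactly as an un-precoded Costa layer treats the state as noise. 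No choice of the two coefficients on the innermost layer alone can repair this for all the $I(\cdot;Y_3)-I(\cdot;S_0)$ differences simultaneously---the obstacle you flag as ``bookkeeping'' is in fact fatal for your structure---and the fix is the full coefficient set $(\alpha_0,\alpha_1,\alpha_2,\alpha_{13},\alpha_{23})$ of (\ref{eq: DPCcoeff}). Your treatment of the cross-link bounds on $R_{12}$ and $R_{21}$ is fine, since those mutual-information terms condition on $S_0$.
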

\begin{remark}\emph{The achievable rate region of the state-dependent Gaussian GMAC with full CSIT is independent of the additive interference.
That is, the effect of the additive interference is completely removed as if there is no interference over the channel (clean channel). Hence, the region is equal to that of the counterpart channel without interference \cite[Chapter 11]{kramer2007topics}. Moreover, the achievable rate region (\ref{eq: proposition1}) includes that of the partial decode-and-forward Gaussian relay channel with no interference \cite{ZahediIT2006}}.
\end{remark}
\begin{proof}[Proof of Proposition \ref{propos:proposition1}] To prove the result we use the rate region developed in Theorem \ref{th:theorem1}. Note that we can extend the result in Theorem \ref{th:theorem1} for the discrete case to the Gaussian model in a similar approach given in \cite[Chapter 23]{elgamal2011network}. For $k\in \{1,2\}$, let
\begin{subequations}\label{eq:txsignals}
\begin{align}
U &= X_u+\alpha_0 S_0,\\
V_k&= \sqrt{\rho_kP_k‎}‎X_u+X'_k+\alpha_kS_0,\\
V_{k3}&=X''_k+\alpha_{k3}S_0,
\end{align}
\end{subequations}
where $X_u\sim\mathcal{N}(0,1)$ carries the cooperation message to the receiver, the fresh information and the
private message are transmitted via $X'_k\sim\mathcal{N}(0,\bar{\rho}_k P'_k)$ and $X''_k\sim\mathcal{N}(0,\bar{\rho}_k P''_k)$
by the $k$th encoder, respectively. The random variables $X_u,X'_k$ and $X''_k$ are mutually independent
for $k\in\{1,2\}$ and are independent from $S_0$. Finally, let the transmitted symbols at the encoders be
\begin{align}\label{eq: x_k}
X_k&=\sqrt{\rho_k P_k}X_u+X'_k+X''_k.
\end{align}
To ensure the average power constraint at each encoder, we have $P'_k+P''_k\leq P_k$. In (\ref{eq:txsignals}), $\alpha_0,\alpha_k$ and $\alpha_{k3}$ for $k\in\{1,2\}$, are design parameters to be optimized. In Appendix II, we prove that the optimal choices of the precoding (i.e., dirty paper coding (DPC)) coefficients in (\ref{eq:txsignals}) are given by
\begin{subequations}\label{eq: DPCcoeff}
\begin{align}
\alpha_0 &= \frac{\sqrt{\rho_1P_1}+\sqrt{\rho_2P_2}}{P_1+P_2+2\sqrt{\rho_1\rho_2P_1P_2}+N_3},\\
\alpha_k &= \frac{\sqrt{\rho_k P_k}\big(\sqrt{\rho_1P_1}+\sqrt{\rho_2P_2}\big)+\bar{\rho}_k P'_k}{P_1+P_2+2\sqrt{\rho_1\rho_2P_1P_2}+N_3},\\
\alpha_{k3} &= \frac{\bar{\rho}_k P''_k}{P_1+P_2+2\sqrt{\rho_1\rho_2P_1P_2}+N_3}.
\end{align}
\end{subequations}
Now by substituting the auxiliary random variables given in (\ref{eq:txsignals}), (\ref{eq: x_k}), the above optimal precoding coefficients in the achievable rate region of Theorem \ref{th:theorem1}, noting that $\Delta^-=\delta^-_1=\delta^-_2=0$, and utilizing Fourier--Motzkin elimination algorithm \cite{elgamal2011network}, the achievable rate region in Proposition \ref{propos:proposition1} is derived.
\end{proof}
To investigate the role of cooperation between the encoders, a Gaussian dirty GMAC with full CSIT is considered wherein $P_1=P_2=10$ [dB], $N_1=N_1=0$ [dB] and $N_3=7$ [dB] for $Q_0\in\{2,5,8\}$ [dB]. The achievable rate regions are depicted in Fig.~\ref{fig:numerical} for the following four scenarios:
\begin{itemize}
  \item \emph{Dirty GMAC with full CSIT}: The encoders cooperate with each other and the effect of additive interference is completely removed;
  \item \emph{Dirty MAC with full CSIT}: A conventional Gaussian MAC and the effect of additive interference is completely removed. By substituting $\rho_k=P'_k=0$ for $k\in\{1,2\}$ in (\ref{eq: proposition1}), the associated rate region is established.
  \item \emph{Dirty GMAC without CSIT}: A Gaussian GMAC where the interference is treated as an additive noise; and
  \item \emph{Dirty MAC without CSIT}: A conventional Gaussian MAC where the interference is treated as an additive noise.
\end{itemize}
From Fig. ~\ref{fig:numerical}, we observe that the achievable rate region of the channel model with cooperating encoders and CSIT is the largest one and that of the case without cooperating encoders (i.e., conventional MAC) in which the CSI is \emph{not} available at the encoders is the worst case. As it is shown in Figs.~\ref{fig:numerical}(a)--\ref{fig:numerical}(c), when the power of the interference decreases, the achievable rate region for the case with cooperating encoders \emph{without} CSIT becomes larger than that of the conventional MAC \emph{with} CSIT.
\begin{figure}[!t]
        \centering
        \psfrag{R1}[][][.6]{$R_1$}
        \psfrag{R2}[][][.6]{$R_2$}
        \psfrag{A}[][][.6]{$\times$A}
        \psfrag{B}[][][.6]{$\times$B}
        \begin{subfigure}[b]{0.32\textwidth}
                \centering
                \includegraphics[scale=0.4]{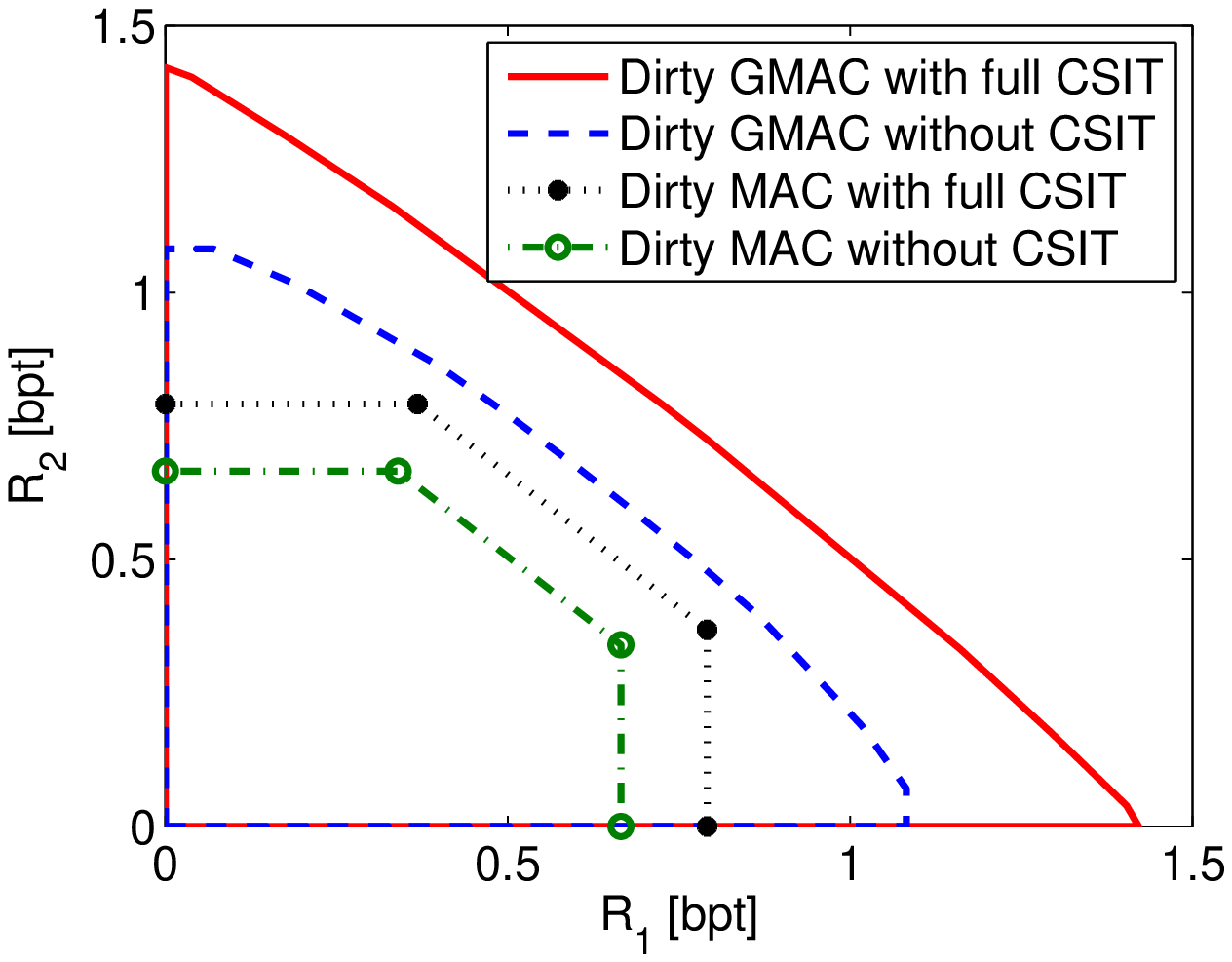}
                \caption{$Q=2$ [dB]}
                \label{fig:q2}
        \end{subfigure}%
        ~ 
        \begin{subfigure}[b]{0.3\textwidth}
                \centering
                \includegraphics[scale=0.4]{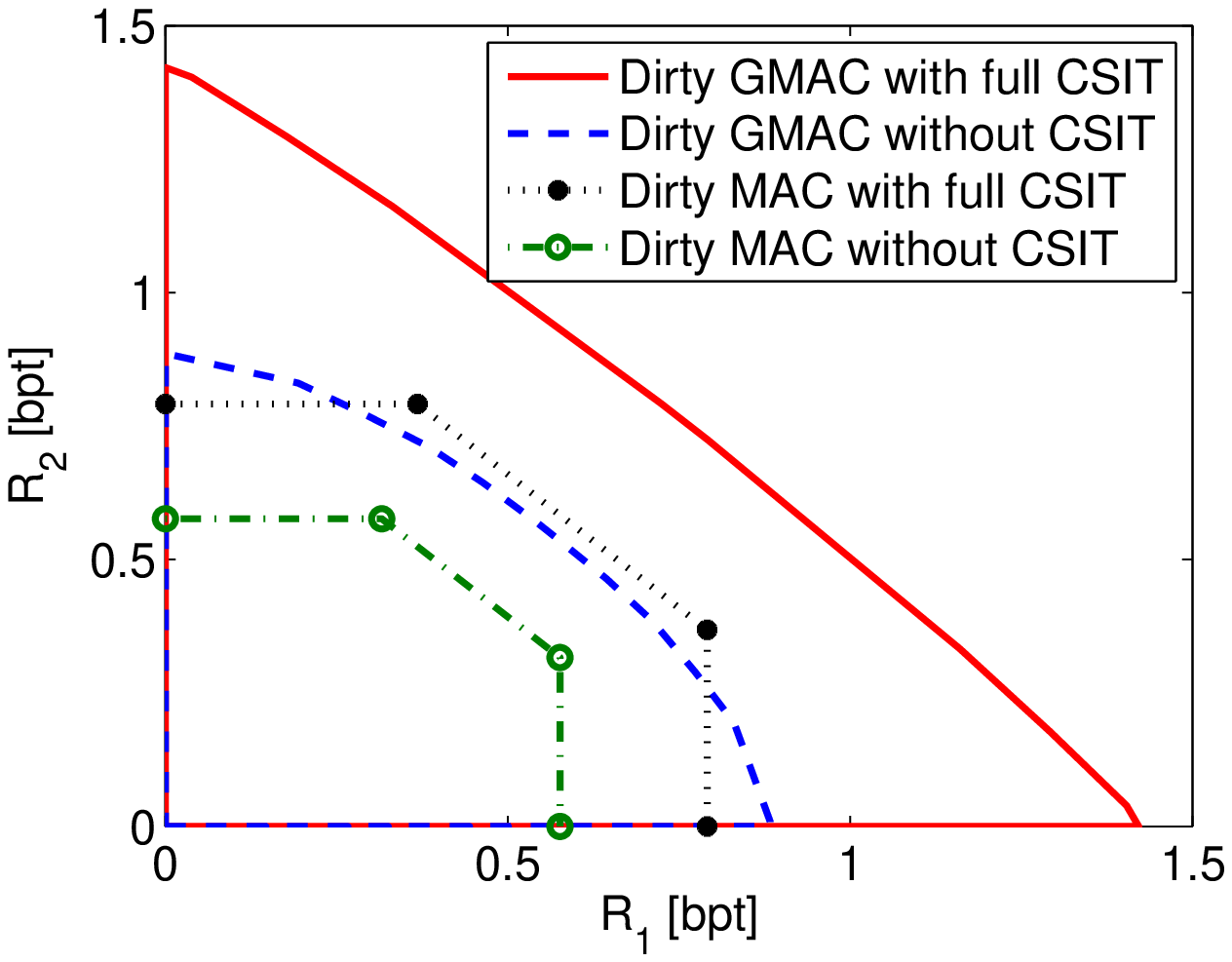}
                \caption{$Q=5$ [dB]}
                \label{fig:q5}
        \end{subfigure}
        ~ 
        \begin{subfigure}[b]{0.3\textwidth}
                \centering
                \includegraphics[scale=0.4]{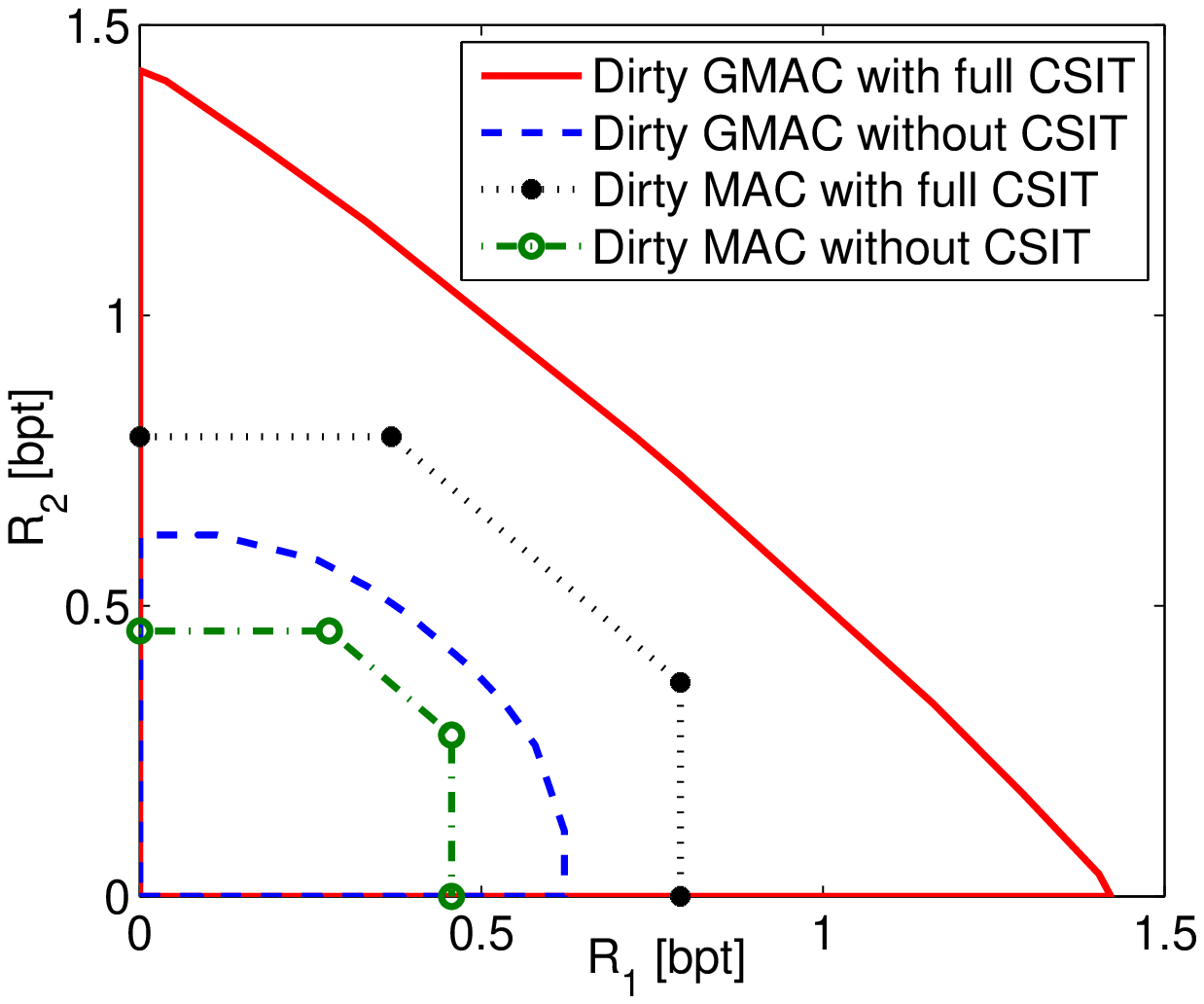}
                \caption{$Q=8$ [dB]}
                \label{fig:q8}
        \end{subfigure}
        \caption{\small Achievable rate regions for the state-dependent GMAC with full CSI at the encoders for given channel parameters $P_1=P_2=10\text{ [dB]}, N_1=N_2=0\text{ [dB]}$, $N_3=7\text{ [dB]}$ and $Q \in \{2,5,8\}$ [dB].}\label{fig:numerical}
        \vspace{-0.5cm}
\end{figure}
\subsection{State-Dependent Gaussian GMAC with partially informed encoders}
In this subsection, we evaluate the achievable rate region of the general dirty Gaussian GMAC defined in (\ref{eq:gaussianmodel}) and one special case. It is assumed that $(S_0,S_k)$ are available noncausally at the $k$th encoder. With a similar discussion as that in the previous subsection, by an optimal choice of Costa precoding coefficients the effect of $S_0$ can be completely removed. Without loss of generality, a doubly dirty GMAC defined in (\ref{eq:gaussianmodel}) subject to $S_0=\emptyset$ is studied in the following. Before presenting the achievable rate region, the procedure of generating signals at encoder $k$ is discussed in the following. An illustration of the signal generation at encoder 1 is depicted in Fig.~\ref{fig:GaussianDPC1}.

The transmitted signal from each encoder has four components:
\begin{itemize}
\item \emph{Partial cleaning of the private state:} Since $S_k$ is only available at encoder $k$, the encoder dedicates $\bar{\eta_k}P_k$ of its total power to clean the state, where $1-\min(1,Q_k/P_k)\leq \eta_k \leq 1$. Therefore, the effective additive interference over the channel becomes $\beta_kS_k$ where $\beta_k:=1-\sqrt{\tfrac{\bar{\eta_k}P_k}{Q_k}}$.
\item \emph{Message cooperation}: Gaussian codeword $U\sim\mathcal{N}(0,1)$ carries the cooperation message $m_c^b$.
\item \emph{Fresh information}: Gaussian codeword $X'_k\sim\mathcal{N}(0,P'_{ke})$ which is generated by use of generalized DPC (GDPC), carries the fresh message $m_{k,3-k}^b$, where $P'_{ke}=\eta_k \bar{\rho_k}P'_k$. By use of GDPC, we mean that, the state is partially cleaned and then DPC is used to combat $\beta_kS_k$.
\item \emph{Direct transmission}: Gaussian codeword $X''_k\sim\mathcal{N}(0,P''_{ke})$ which is generated by use of GDPC, carries the private message $m_{k3}^b$, where $P''_{ke}=\eta_k \bar{\rho_k}P''_k$.
\end{itemize}
\begin{figure}[!t]
\centering
\psfrag{randomcoding}[][][.83]{Random coding}
\psfrag{CostaCoding}[][][.85]{Costa coding}
\psfrag{mc}[][][.8]{$m_c^b$}
\psfrag{m12}[][][.8]{$m_{12}^b$}
\psfrag{m13}[][][.8]{$m_{13}^b$}
\psfrag{s1}[][][.8]{$S_1^b$}
\psfrag{factor}[][][.8]{$-\sqrt{\tfrac{\bar{\eta_1}P_1}{Q_1}}$}
\psfrag{f2}[][][.8]{$\sqrt{\eta_1\rho_1P_1}$}
\psfrag{u}[][][.8]{$U(m_c^b)$}
\psfrag{xp}[][][.8]{$X'_1(m_{12}^b,S_1^b)$}
\psfrag{xz}[][][.8]{$X''_1(m_{13}^b,S_1^b)$}
\psfrag{x1}[][][.8]{$X_1(m_c^b,m_{12}^b,m_{13}^b,S_1^b)$}
\includegraphics[scale=0.32]{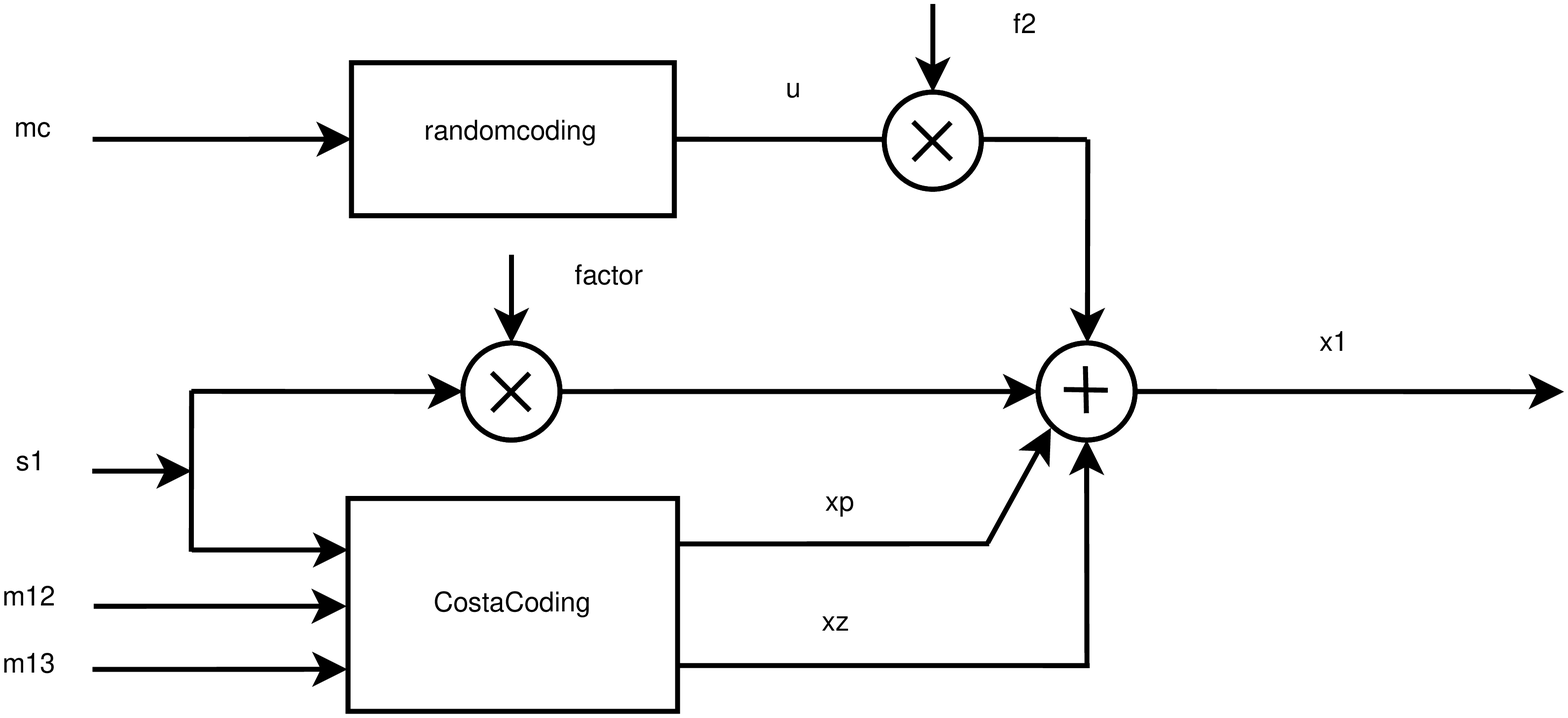}
\caption{\small Signal generation for the doubly dirty GMAC at encoder 1.}
\label{fig:GaussianDPC1}
\vspace{-0.5cm}
\end{figure}

The auxiliary random variables and the transmitted signal are more specifically given by
\begin{subequations}\label{eq: AuxRvDGMAC}
\begin{align}
V_k&:=\sqrt{\eta_k\rho_kP_k}U+X'_k+\alpha_k \beta_k S_k,\\
V_{k3}&:=X''_k+\alpha_{k3} \beta_k S_k,\\
X_k&:=\sqrt{\eta_k\rho_kP_k}U+X'_k+X''_k+(\beta_k-1)S_k.
\end{align}
\end{subequations}
To ensure the average power constraint at each encoder, we have $P'_k+P''_k\leq P_k$. All auxiliary random variables $U,~X'_k,~X''_k$, and $S_k$ are mutually independent from each other and the channel noises.
\begin{proposition}\label{propos:proposition2} \emph{For the general dirty Gaussian GMAC defined in (\ref{eq:gaussianmodel}) with partially informed encoders, the rate region $\mathcal{R}^d_G$ is achievable;
\small
\begin{align}\nonumber
\mathcal{R}^d_G:= \bigcup \Bigg\{(R_1,R_2):& R_1=R_{12}+R_{13},~R_2=R_{21}+R_{23},\\\nonumber
R_{12} \leq & \c \left(\frac{P'_{1e}+Q_{1e}-\hat{Q}_{1e}}{P''_{1e}+\hat{Q}_{1e}+N_2}\right)-\c_{12},\\\nonumber
R_{21}\leq & \c \left(\frac{P'_{2e}+Q_{2e}-\hat{Q}_{2e}}{P''_{2e}+\hat{Q}_{2e}+N_1}\right)-\c_{21},\\\nonumber
R_{13}\leq&\c \left(\frac{P''_{1e}+\hat{Q}_{1e}-\doublehat{Q}_{1e}}{\doublehat{Q}_{1e}+\doublehat{Q}_{2e}+N_3}\right)-
\c_{13}+\delta_1^-,\\\nonumber
R_{23}\leq&\c \left(\frac{P''_{2e}+\hat{Q}_{2e}-\doublehat{Q}_{2e}}{\doublehat{Q}_{1e}+\doublehat{Q}_{2e}+N_3}\right)-
\c_{23}+\delta_2^-,\\\nonumber
R_{13}+R_{23}\leq&\c \left(\frac{P''_{1e}+P''_{2e}+\hat{Q}_{1e}-\doublehat{Q}_{1e}+\hat{Q}_{2e}-\doublehat{Q}_{2e}}{\doublehat{Q}_{1e}+\doublehat{Q}_{2e}+N_3}\right)-
\c_{13}-\c_{23}+\Delta^-,\\
R_1+R_2\leq&\c \left(\frac{\eta_1P_1+\eta_2P_2+2\sqrt{\eta_1\eta_2\rho_1\rho_2P_1P_2}+Q_{1e}-\doublehat{Q}_{1e}+Q_{2e}-\doublehat{Q}_{2e}}{\doublehat{Q}_{1e}+\doublehat{Q}_{2e}+N_3}\right)-‎
\c_{12}-\c_{21}-\c_{13}-\c_{23}
\Bigg\},
\end{align}
where the union is taken over $\alpha_k,\alpha_{k3}\in\mathbb{R}^+$, $\eta_k \in [1-\min(1,Q_k/P_k),1]$, $\rho_k \in[0,1]$ and $P'_k,P''_k \geq 0$ such that $P'_k+P''_k\leq P_k$. For $k\in\{1,2\}$, we have
\begin{align*}
Q_{ke}&:=‎\left(‎‎‎\sqrt{Q_k}‎-‎\sqrt{‎\bar{\eta}_kP_k‎}‎\right)‎^2,~~ \hat{Q}_{ke}:=\tfrac{(1-\alpha_k)^2P'_{ke}Q_{ke}}{P'_{ke}+\alpha_k^2Q_{ke}},~~~
\doublehat{Q}_{ke}:=\tfrac{(1-\alpha_k-\alpha_{k3})^2P'_{ke}P''_{ke}Q_{ke}}{P'_{ke}P''_{ke}+(\alpha_k^2P''_{ke}+\alpha_{k3}^2P'_{ke})Q_{ke}},\\
\c_{k,3-k}&:=\c \left(\tfrac{\alpha^2_kQ_{ke}}{P'_{ke}}\right),~~~~~~~~~~~~~~
\c_{k3}:=\c \left(\tfrac{\alpha^2_{k3}\hat{Q}_{ke}}{(1-\alpha_k)^2P''_{ke}}\right),~~~
\delta_k:=\c \left(\tfrac{P'_{ke}+Q_{ke}-\hat{Q}_{ke}}{P''_{ke}+\hat{Q}_{ke}+\doublehat{Q}_{3-k,e}+N_3}\right)-\c_{k,3-k},\\
\Delta_k&:=\c \left(\tfrac{P'_{ke}+Q_{ke}-\hat{Q}_{ke}}{P''_{1e}+P''_{2e}+\hat{Q}_{1e}+\hat{Q}_{2e}+N_3}\right)-\c_{k,3-k},~~
\Delta_3:=\c \left(\tfrac{P'_{1e}+P'_{2e}+Q_{1e}-\hat{Q}_{1e}+Q_{2e}-\hat{Q}_{2e}}{P''_{1e}+P''_{2e}+\hat{Q}_{1e}+\hat{Q}_{2e}+N_3}\right)-\c_{12}-\c_{21}.
\end{align*}
}\end{proposition}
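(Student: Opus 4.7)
The plan is to derive this rate region by evaluating the discrete-memoryless expression of Theorem~\ref{th:theorem1} with the Gaussian auxiliary random variables prescribed in (\ref{eq: AuxRvDGMAC}). As noted in the text immediately preceding the proposition, the common-state component $S_0$ can be handled exactly as in Proposition~\ref{propos:proposition1} via multi-layer Costa precoding on a separate cleaning layer; hence without loss of generality I would set $S_0 = \emptyset$ and treat only the doubly dirty case with private states $S_1, S_2$. The extension of Theorem~\ref{th:theorem1} from finite alphabets to the Gaussian setting follows by the standard discretization/weak-typicality argument under average-power constraints, as in \cite[Chapter 3]{elgamal2011network}.

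The structural idea behind the assignment (\ref{eq: AuxRvDGMAC}) is a two-stage attack on the private state $S_k$. First, the transmitted signal contains the deterministic pre-cancellation term $(\beta_k-1)S_k$, which consumes $\bar{\eta}_k P_k$ of the power budget to reduce the effective interference seen on the channel from $S_k$ to $\beta_k S_k$, where $\beta_k = 1-\sqrt{\bar{\eta}_k P_k/Q_k}$ and the residual effective-state variance is $Q_{ke}$. Second, the remaining effective state is combated by generalized Gelfand--Pinsker coding in each of the two message layers $V_k$ and $V_{k3}$, with inflation coefficients $\alpha_k$ and $\alpha_{k3}$ respectively. The random variable $U$ carries the cooperation message coherently across both encoders through the $\sqrt{\eta_k\rho_kP_k}\,U$ components.

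Given this, I would compute each of the mutual information expressions in (\ref{eq: theorem1}) in closed form under Gaussian statistics. For the inter-encoder bound on $R_{12}$, conditioning on $(U,X_2)$ in $I(V_1;Y_2|UX_2) - I(V_1;S_1|U)$ removes the cooperation and encoder-2 contributions, reducing the calculation to a standard Costa computation over an effective single-user Gaussian dirty channel with signal $X'_1$, state $\beta_1 S_1$, and noise $X''_1+N_2$; a direct evaluation via linear-MMSE identities yields the stated bound with penalty $\c_{12}$. The bound on $R_{21}$ follows by symmetry. For the private-message bounds $R_{13}$ and $R_{23}$, I would condition on $(U,V_1,V_2,V_{23})$ and $(U,V_1,V_2,V_{13})$ respectively and identify the conditional variance of $\beta_k S_k$ given $V_k$ with $\hat{Q}_{ke}$ and given $(V_k,V_{k3})$ with $\doublehat{Q}_{ke}$ via the standard linear-MMSE formula for jointly Gaussian random variables; simplification then delivers the penalty $\c_{k3}$. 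The constraints on $R_{13}+R_{23}$ and on $R_1+R_2$ are obtained analogously from the joint expressions $I(V_{13}V_{23};Y_3|UV_1V_2)$ and $I(UV_1V_2V_{13}V_{23};Y_3)-I(UV_1V_2V_{13}V_{23};S_1S_2)$, using the mutual independence of $X_u,X'_k,X''_k,S_k$ to split covariances.

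Finally, I would collect the six resulting rate inequalities together with the rate-split equations $R_1 = R_{12}+R_{13}$ and $R_2 = R_{21}+R_{23}$ and eliminate the splitting variables $R_{12},R_{13},R_{21},R_{23}$ by Fourier--Motzkin elimination \cite[Appendix D]{elgamal2011network}, exactly as in the proof of Proposition~\ref{propos:proposition1}, to recover the claimed form of $\mathcal{R}^d_G$; the optimization over $(\alpha_k,\alpha_{k3},\eta_k,\rho_k,P'_k,P''_k)$ is left implicit inside the union, matching the structure of the statement. I expect the main obstacle to be the bookkeeping of the multi-layer MMSE computations: with two independent private states and three nested code layers $U \to V_k \to V_{k3}$, evaluating each conditional variance and massaging the results into the compact closed forms $\hat{Q}_{ke}$ and $\doublehat{Q}_{ke}$ is tedious, even though every individual algebraic step is mechanical. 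No new information-theoretic ingredient beyond Theorem~\ref{th:theorem1} and standard Gaussian/Costa calculus is required.
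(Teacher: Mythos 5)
Your proposal matches the paper's own proof (Appendix III): substitute the Gaussian auxiliary random variables of (\ref{eq: AuxRvDGMAC}) into Theorem~\ref{th:theorem1} after removing $S_0$ by Costa precoding as in Proposition~\ref{propos:proposition1}, and evaluate each conditional entropy via linear-MMSE estimation of $\beta_kS_k$ from the relevant layers, identifying the residual variances with $\hat{Q}_{ke}$ and $\doublehat{Q}_{ke}$. The only cosmetic difference is your final Fourier--Motzkin step, which is unnecessary here since the proposition retains the split rates $R_{12},R_{13},R_{21},R_{23}$ explicitly.
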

\emph{Proof of Proposition \ref{propos:proposition2}}: See Appendix III.
\begin{remark}\emph{Although the achievable rate region $\mathcal{R}^d_G$ seems to be complex, we have observed in most cases that the rate region is a convex combination of the following four cases;
\begin{itemize}
\item \textbf{Case 1} ($R_{13}=R_{23}=0$):
Both of encoders transmit their messages to the receiver by fully cooperating with each other. For this case, we have
\begin{equation*}
P''_1=P''_2=0,~~ \alpha_{13}=\alpha_{23}=0,~~\delta_1^-=\delta_2^-=\Delta^-=0.
\end{equation*}
\item \textbf{Case 2} ($R_{12}=R_{21}=0$):
Both of encoders ignore their feedback signals, and directly transmit their messages to the receiver without cooperation. For this case, we have
\begin{equation*}
P'_1=P'_2=0,~~\alpha_1=\alpha_2=0,~~ \rho_1=\rho_2=0, ~~\delta_1^-=\delta_2^-=\Delta^-=0.
\end{equation*}
\item \textbf{Case 3} ($R_{13}=R_{21}=0$):
Encoder 1 ignores the feedback signal, but encoder 2 relays the message of encoder 1 along with his private message to the receiver. For this case, we have
\begin{equation*}
P''_1=P'_2=0,~~\alpha_{13}=\alpha_2=0,~~\delta_1^-=\delta_2^-=\Delta^-=0.
\end{equation*}
\item \textbf{Case 4} ($R_{12}=R_{23}=0$):
Encoder 2 ignores the feedback signal, but encoder 1 relays the message of encoder 2 along with his private message to the receiver. For this case, we have
\begin{equation*}
P'_1=P''_2=0,~~\alpha_1=\alpha_{23}=0,~~\delta_1^-=\delta_2^-=\Delta^-=0.
\end{equation*}
\end{itemize}
}\end{remark}
In the following two special channel models are discussed.
\begin{enumerate}
  \item \emph{Doubly dirty Gaussian GMAC with informed receiver}:
  Assume that the receiver also knows the states $(S_0,S_1,S_2)$. Replacing $Y_3$ with $(Y_3,S_0,S_1,S_2)$ in Theorem \ref{th:theorem1}, an achievable rate region is derived. Moreover for the Gaussian channel model, the rate region of the clean Gaussian GMAC given in Proposition \ref{propos:proposition1}, equation (\ref{eq: proposition1}), is achieved.
  \item \emph{Dirty Gaussian GMAC with availability of full CSI at one encoder}:
  Let $S_0=S_2=\emptyset$. That is the channel is controlled by $S_1$ which is noncausally available only at encoder 1.
  \begin{proposition}\label{propos:proposition3}\emph{For the dirty Gaussian GMAC wherein $Q_0=Q_2=0$ and $Q_1\longrightarrow \infty$, the achievable rate region is
  \begin{align}\nonumber
  \mathcal{R}_G:= \bigcup \Bigg\{(R_1,R_2):~R_2  & \leq \c\Big(\frac{\bar{\rho_2}P'_2}{\bar{\rho_2}P''_2+N_1}\Big)+\c\Big(\frac{\bar{\rho_2}P''_2}{\big(\tfrac{1-\alpha_{13}}{\alpha_{13}}\big)^2\bar{\rho_1}P_1+N_3}\Big),\\
  R_1+R_2 & \leq \c \Big(-1+\frac{\bar{\rho_1}P_1}{(1-\alpha_{13})^2\bar{\rho_1}P_1+\alpha_{13}^2N_3}\Big)\Bigg\},
  \end{align}
  where the union is taken over $\rho_k\in[0,1]$, $\alpha_{13}\in \mathbb{R}^+$, and $P'_2+P''_2 \leq P_1$.}
\end{proposition}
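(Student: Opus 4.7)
The plan is to specialize Theorem~\ref{th:theorem1} (or equivalently Proposition~\ref{propos:proposition2}) to the Gaussian setting with $S_0=S_2=\emptyset$ and $S_1\sim\mathcal{N}(0,Q_1)$ known noncausally only at encoder~1, and then take $Q_1\to\infty$ before applying Fourier--Motzkin elimination. First I would choose the auxiliary random variables so that the pmf of Theorem~\ref{th:theorem1} admits a natural Gaussian realization: set $P'_1=0$, which forces $R_{12}=0$ because encoder~2 cannot decode anything through an arbitrarily strong unknown state; set $\eta_1=1$, since no finite power can partially clean a state of unbounded variance; collapse $V_1$ into $U$ via $V_1=\sqrt{\rho_1P_1}\,U$; and transmit the private message of encoder~1 by Costa coding, $V_{13}=X''_1+\alpha_{13}S_1$, with $X''_1\sim\mathcal{N}(0,\bar{\rho}_1P_1)$ and $\alpha_{13}$ kept free. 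Encoder~2, having no CSI, uses plain superposition: $V_2=\sqrt{\rho_2P_2}\,U+X'_2$, $V_{23}=X''_2$, $X_2=\sqrt{\rho_2P_2}\,U+X'_2+X''_2$ with $X'_2\sim\mathcal{N}(0,\bar{\rho}_2P'_2)$ and $X''_2\sim\mathcal{N}(0,\bar{\rho}_2P''_2)$.

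Next I would evaluate the three surviving mutual-information expressions from Theorem~\ref{th:theorem1}. The bound on $R_{21}$ comes from $Y_1=X_1+X_2+S_1+Z_1$: encoder~1 subtracts its known $X_1$ and $S_1$, and after conditioning on $U$ decodes $X'_2$ against interference $X''_2$, yielding the first summand of the claimed $R_2$ bound. For $R_{23}$, conditioning on $(U,V_1,V_2,V_{13})$ and subtracting the known components from $Y_3$ leaves a two-equation Gaussian system in $(X''_1,S_1)$: combining the residual $X''_1+X''_2+S_1+Z_3$ with $V_{13}=X''_1+\alpha_{13}S_1$ eliminates $S_1$ and produces the scalar observation $(\alpha_{13}-1)X''_1+\alpha_{13}X''_2+\alpha_{13}Z_3$, whose SNR for $X''_2$ equals $\bar{\rho}_2P''_2/((\tfrac{1-\alpha_{13}}{\alpha_{13}})^2\bar{\rho}_1P_1+N_3)$, exactly the second summand. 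Finally, the four-term sum bound $I(UV_1V_2V_{13}V_{23};Y_3)-I(UV_1V_2V_{13}V_{23};S_1)$ reduces to the standard Costa rate for $X''_1$ with generic coefficient $\alpha_{13}$ against noise $N_3$, giving $\c\!\big(-1+\bar{\rho}_1P_1/((1-\alpha_{13})^2\bar{\rho}_1P_1+\alpha_{13}^2N_3)\big)$.

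The main obstacle is taking the limit $Q_1\to\infty$ cleanly: the individual mutual informations $I(V_{13};Y_3\mid\cdots)$ and $I(V_{13};S_1\mid\cdots)$ each diverge like $\tfrac{1}{2}\log Q_1$, so I must verify that the leading $Q_1$ terms cancel in every difference. I would handle this by writing out the exact Gaussian conditional variances, extracting the dominant $Q_1$ factor from both numerator and denominator in each ratio, and checking that the quotient converges to the stated closed-form expression; this is precisely the algebraic content of Costa's dirty-paper property applied at coefficient $\alpha_{13}$. A secondary but necessary check is that under this parameter specialization the correction terms $\delta_1^-,\delta_2^-,\Delta^-$ all vanish in the limit (they are easily seen to be nonpositive minima of quantities that either are identically zero because $V_1$ is a function of $U$, or tend to zero because the remaining residual interference $(1-\alpha_{13})S_1$ dominates), so they can be dropped from the region.

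As the last step, with $R_1=R_{13}$ and $R_2=R_{21}+R_{23}$ and the nonnegativity $R_{21},R_{23}\geq 0$, I would apply Fourier--Motzkin elimination. Summing the individual bounds on $R_{21}$ and $R_{23}$ gives the stated $R_2$ constraint; the four-term sum bound of Theorem~\ref{th:theorem1}, evaluated at $R_{12}=0$, gives the stated $R_1+R_2$ constraint. I expect to verify that in the limit $Q_1\to\infty$ the individual $R_{13}$ bound and the pairwise $R_{13}+R_{23}$ bound both coincide with the four-term sum bound, so they become redundant once $R_1,R_2\geq 0$ is imposed, and the region collapses to the two inequalities stated in Proposition~\ref{propos:proposition3}.
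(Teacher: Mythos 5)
Your proposal is correct and follows essentially the same route as the paper: the paper's proof is a one-line specialization of Proposition~\ref{propos:proposition2} with exactly the parameter choices you identify ($P'_1=0$, $\alpha_1=\alpha_2=\alpha_{23}=0$, $\eta_1=\eta_2=1$, $\delta_1^-=\delta_2^-=\Delta^-=0$), followed implicitly by the limit $Q_1\to\infty$. You simply carry out explicitly the Gaussian evaluations and the cancellation of the divergent $\tfrac{1}{2}\log Q_1$ terms that the paper leaves to the reader.
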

\begin{proof}
It is sufficient to substitute the following parameters in Proposition \ref{propos:proposition2};
\vspace{-.5cm}
‎‎\begin{align*}‎‎
Q_2&=0,~\alpha_2=\alpha_{23}=0‎,~‎‎\eta_2=1,~\delta_2^-=0,\\
P'_1 &= 0,~\alpha_1=0,~\eta_1=1,~\delta_1^-=\Delta^-=0.
‎\end{align*}‎‎‎\end{proof}
\begin{remark}\emph{Maximum values of $R_1+R_2$ and $R_2$ in Proposition \ref{propos:proposition3} are;}\end{remark}

\begin{itemize}
\item The maximum sum-rate is
    \begin{equation*}
    \max (R_1+R_2) = \c\Big(\frac{P_1}{N_3}\Big).
    \end{equation*}
\item The maximum single rate of $R_2$ is
\begin{itemize}
\item For $N_1 \leq N_3$,
\begin{equation*}
\max R_2 = \min \Bigg\{\c\left(\frac{P_2}{N_1}\right),\c\left(\frac{P_1}{N_3}\right)\Bigg\}.
\end{equation*}
\item For $N_1 \geq N_3$,
\begin{align*}
\max R_2 =
\left\{
\begin{array}{rl}
&\c\left(\frac{P_2}{N_3} \right),~~~~~~~~~~~~~~~\text{ if } P_1 \geq P_2+N_3\\
&\c\left(\frac{P_2}{\big(\tfrac{1-\alpha_{13}^*}{\alpha_{13}^*}\big)^2P_1+N_3}\right),~ \text{ if } P_1 < P_2+N_3~:~ \alpha_{13}^*=\tfrac{2P_1}{P_1+P_2+N_3}.
\end{array} \right.
\end{align*}
\end{itemize}
\end{itemize}
\end{enumerate}
By numerical examples, Proposition \ref{propos:proposition2} and \ref{propos:proposition3} are discussed in the following.

Assume a doubly dirty Gaussian GMAC in which $S_1$ is noncausally available at encoder 1 and $S_2$ is noncausally available at encoder 2, where the channel parameters are set to $P_1=P_2=10\text{ [dB]}, N_1=N_2=0\text{ [dB]}$, $N_3=10\text{ [dB]}$ and $Q_1=Q_2\in\{7,13\}$ [dB]. For the given parameters, achievable rate regions are plotted for different scenarios in Fig. \ref{fig:numerical2}. It is shown that when each user partially cleans the interference and utilizes GDPC, how much the achievable rate region is enlarged compared to utilizing pure DPC (without partially cleaning the interferences). Moreover, achievable rate regions of the clean MAC and clean GMAC are plotted for comparison. As it is shown in Fig. \ref{fig:6a}, for moderate values of interferences' powers, cooperation between users is very beneficial. As the interferences become stronger, encoders cannot combat the interferences perfectly, as illustrated in Fig. \ref{fig:6b}.

\begin{figure}[!t]
        \centering
        \begin{subfigure}[b]{0.5\textwidth}
                \centering
                \includegraphics[scale=0.5]{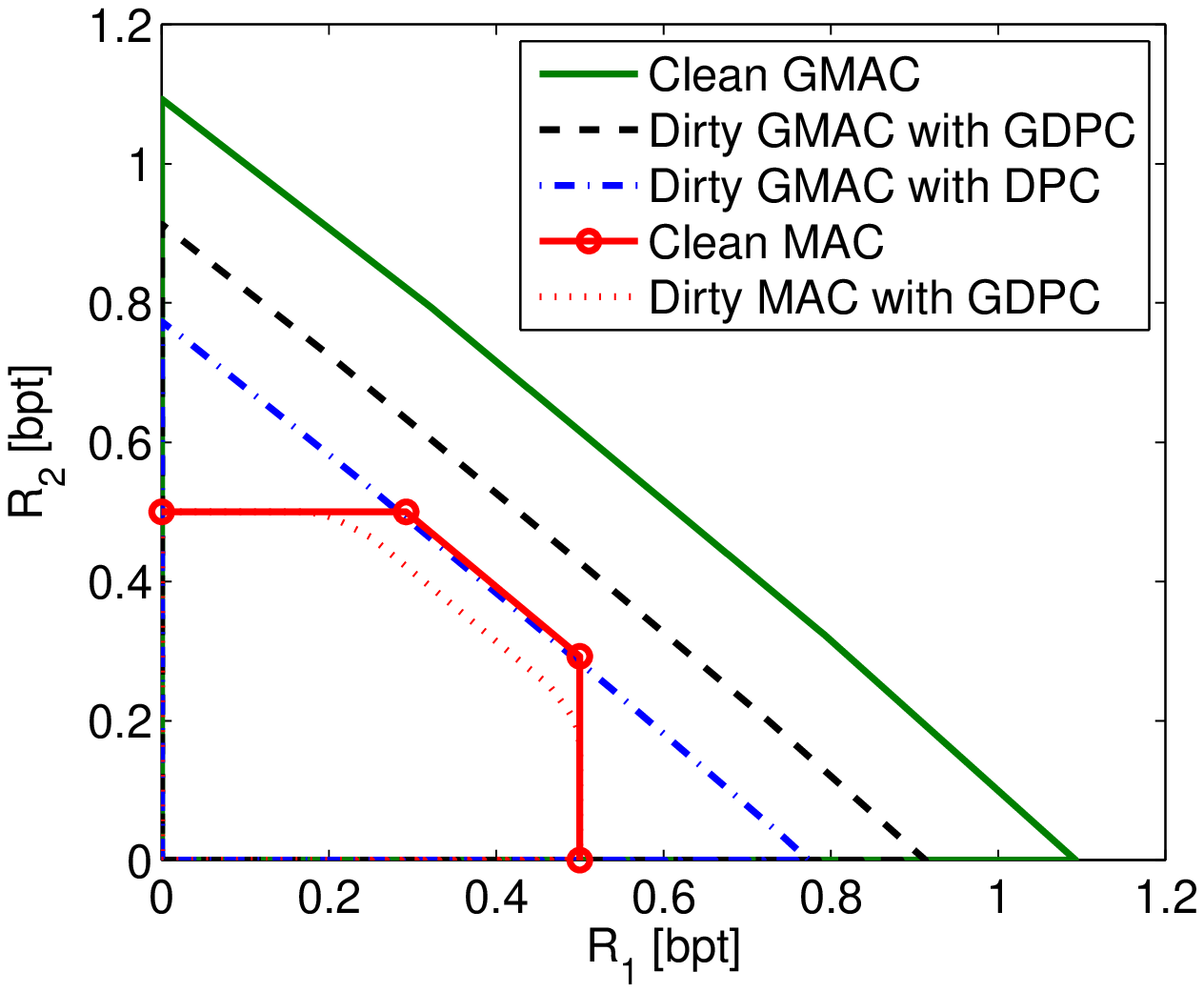}
                \caption{$Q_1=Q_2=7$ [dB]}
                \label{fig:6a}
        \end{subfigure}%
        \begin{subfigure}[b]{0.5\textwidth}
                \centering
                \includegraphics[scale=0.5]{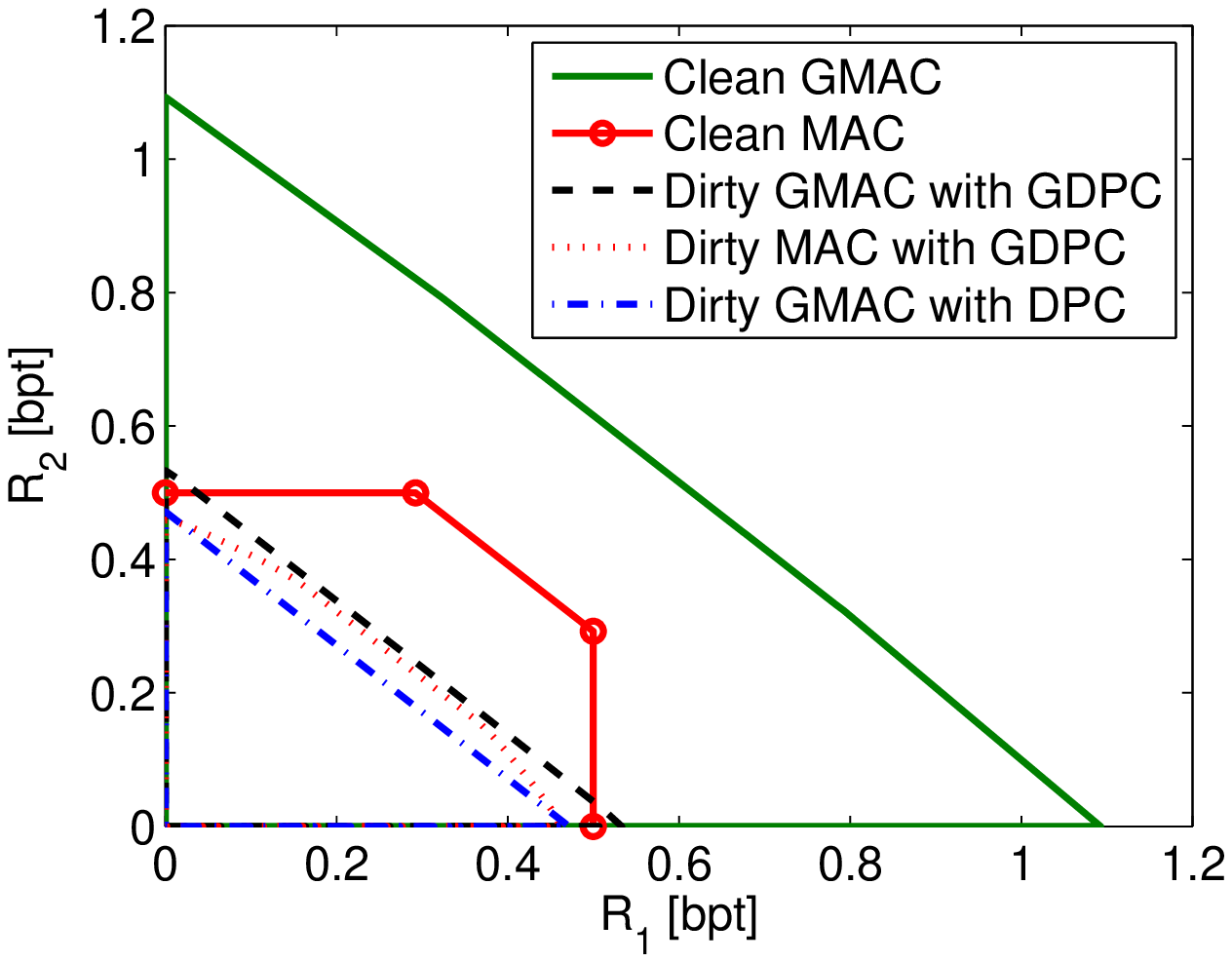}
                \caption{$Q_1=Q_2=13$ [dB]}
                \label{fig:6b}
        \end{subfigure}
        \caption{\small Achievable rate regions for the state-dependent GMAC with partial CSI at the encoders for given channel parameters $P_1=P_2=10\text{ [dB]}, N_1=N_2=0\text{ [dB]}$, $N_3=10\text{ [dB]}$.}
        \label{fig:numerical2}
        \vspace{-.5cm}
\end{figure}
\begin{figure}[!t]
\centering
\includegraphics[scale=.5]{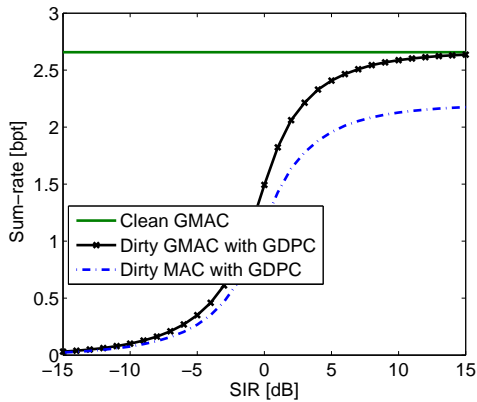}
\caption{\small Sum-rate versus SIR for $P_1=P_2=10$ [dB], $N_1=N_2=-10$ [dB] and $N_3=0$ [dB].}
\label{fig:sumratevsSIR}
\vspace{-.6cm}
\end{figure}
In Fig. \ref{fig:sumratevsSIR}, sum-rate is plotted as a function of  signal-to-interference ratio (SIR) for $P_1=P_2=P$ [dB], $Q_1=Q_2=Q$ [dB]. Hence, SIR=$P-Q$ [dB]. An interesting observation is that whenever $N_1,N_2 \leq N_3$, independent of the interferences' powers, the encoders always prefer to fully cooperate with each other. For this case, the cooperative transmission outperforms the non-cooperative one (i.e. standard MAC).

In Fig. \ref{fig:numerical3}, achievable rate regions are plotted for the dirty Gaussian GMAC and channel parameters $S_2=\emptyset$, $Q_1 \longrightarrow \infty$, $P_2=10\text{ [dB]}, N_1=N_2=0\text{ [dB]}$, $N_3=10\text{ [dB]}$ and two values of $P_1\in \{10,15\}$ [dB]. As it is derived in Proposition \ref{propos:proposition3}, when at least one of the users has access to the full CSI noncausally, we can propose an achievable rate region which is independent of the interference signal. In Fig. \ref{fig:numerical3}, in addition to the regions of the clean MAC and GMAC, an outer bound on the capacity region of the dirty MAC (without cooperating encoders) is depicted, which is proposed in \cite{kotagiri2008multiaccess}. In Fig. \ref{fig:7a}, it is shown that the achievable rate region of dirty GMAC is equal to the outer region of dirty MAC, but maximum achievable $R_2$ for the dirty MAC is less than $\c (P_1/N_3)$. In Fig. \ref{fig:7b} for $P_2=15$ [dB], the achievable rate region of dirty GMAC increases dramatically. For this case, $R_1$ and $R_2$ approach $\c (P_1/N_3)$ and the achievable rate region of dirty MAC becomes close to the outer region.

\begin{figure}[!t]
        \centering
        \begin{subfigure}[b]{0.5\textwidth}
                \centering
                \includegraphics[scale=0.5]{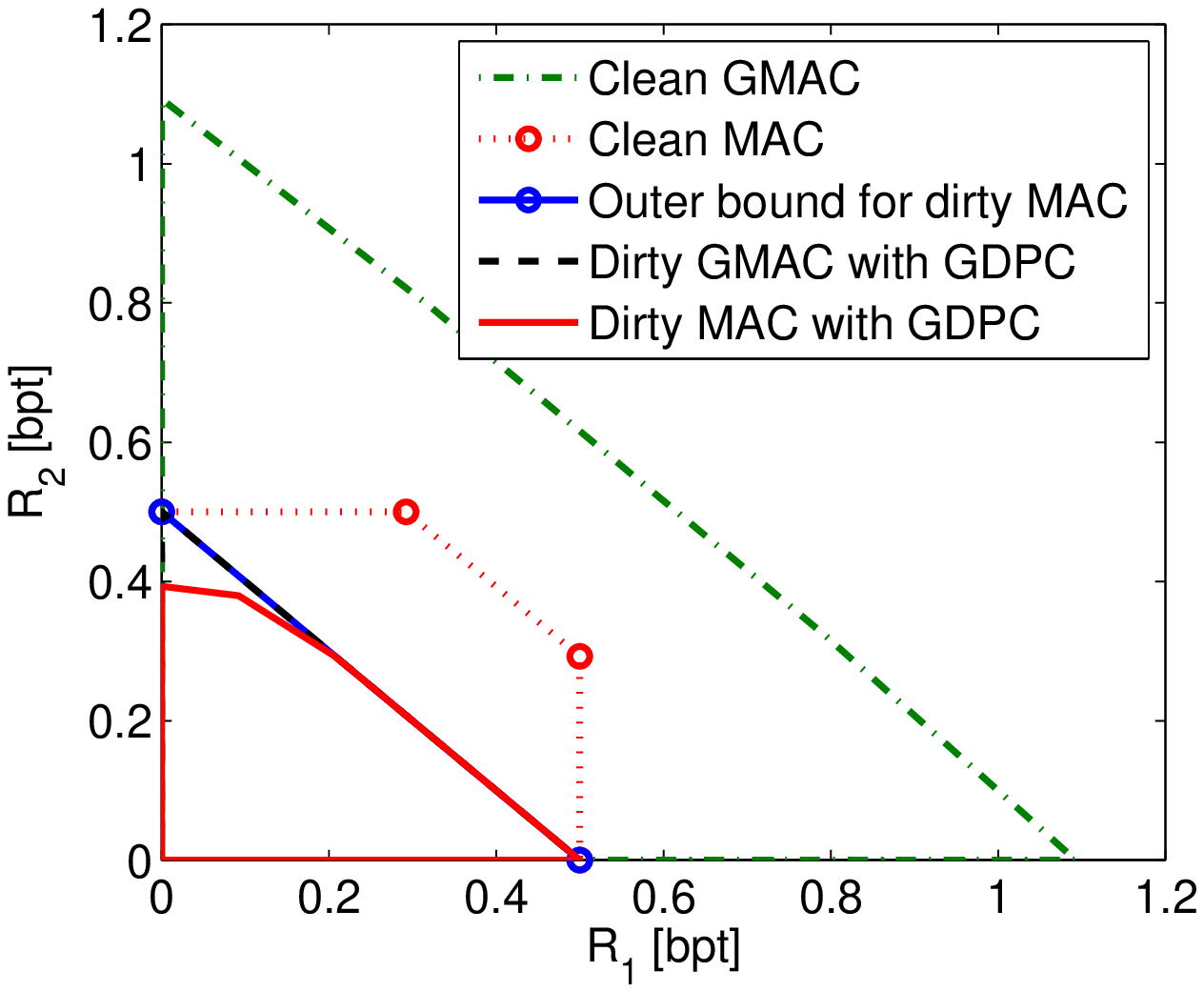}
                \caption{$P_1=10$ [dB]}
                \label{fig:7a}
        \end{subfigure}%
        \begin{subfigure}[b]{0.5\textwidth}
                \centering
                \includegraphics[scale=0.5]{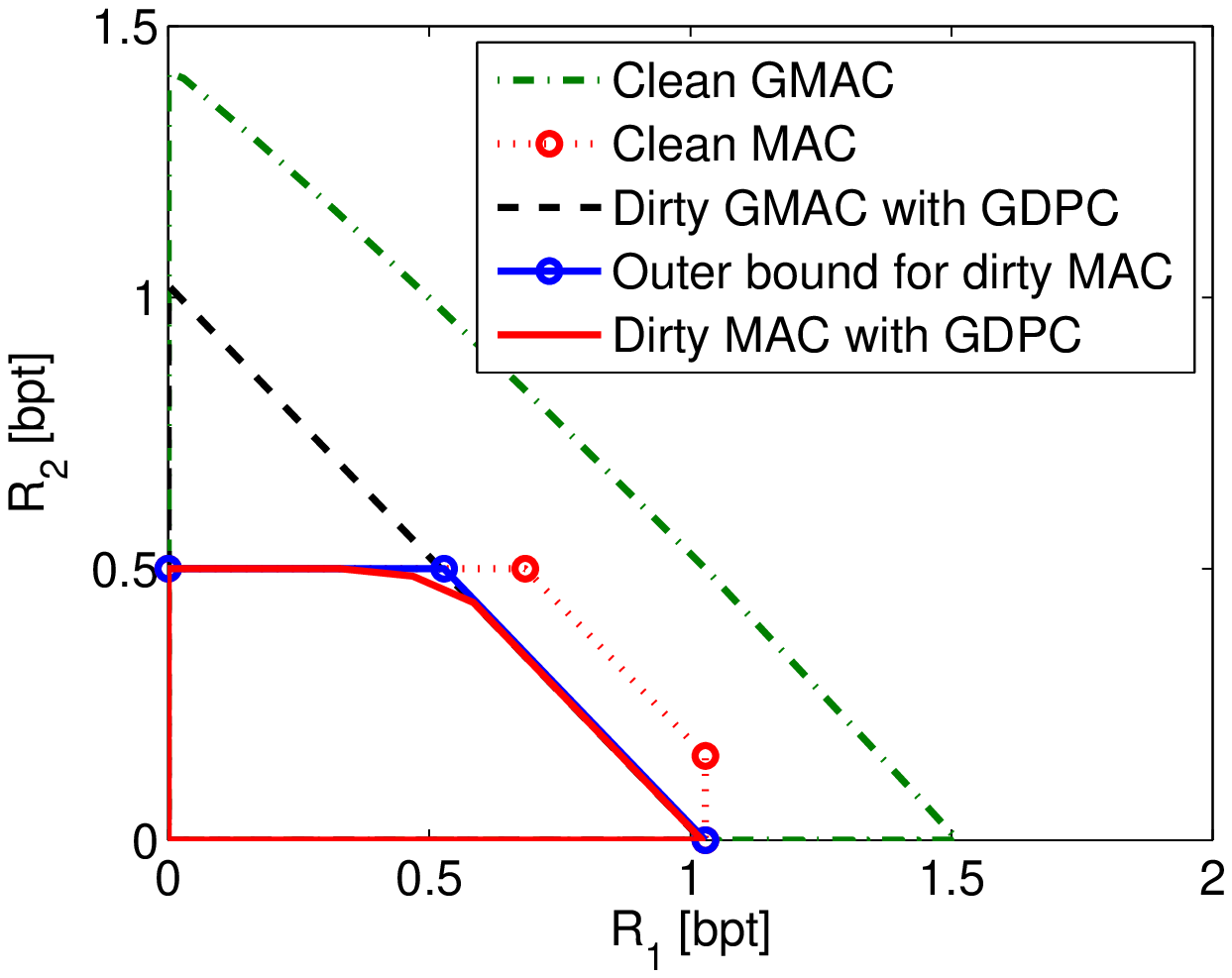}
                \caption{$P_1=15$ [dB]}
                \label{fig:7b}
        \end{subfigure}
        \caption{\small Achievable rate regions for the state-dependent GMAC with full CSI at the encoder 1 for given channel parameters $S_2=\emptyset$, $Q_1 \longrightarrow \infty $, $P_2=10\text{ [dB]}, N_1=N_2=0\text{ [dB]}$, $N_3=10\text{ [dB]}$.}
        \label{fig:numerical3}
        \vspace{-.5cm}
\end{figure}
\section{Conclusions}
We investigated the state-dependent GMAC in which each encoder had
access to partial and noncausal CSIT. Since the encoders receive
feedback from the channel, they can cooperate to transmit coherently a
part of their messages to the receiver.  We constructed a coding
scheme based on rate-splitting, block-Markov encoding,
Gelfand--Pinsker multicoding and superposition coding techniques. To
enable the cooperation in our proposed strategy, each encoder
\emph{partially} decodes the message of the other encoder and the
receiver employs backward decoding with joint unique decoding at each
stage.
Afterwards, we tailored the proposed achievable rate region to a
Gaussian GMAC with an additive interference that is available
noncausally at both of the encoders. By an appropriate choice of
auxiliary random variables and dirty paper coding coefficients, the
effect of the interference is completely removed. Then, a doubly dirty
GMAC is considered and an achievable rate region is derived.
Although the proposed rate region is complex, we observed that that this region can be obtained as a
convex combination of four simple cases.
By numerical examples, it is shown that whenever the feedback links are less noisy than the direct links to the receiver, independent of the interferences' powers, the encoders benefit from cooperating with each other. Moreover, we proved that if at least one of the users knows the CSI completely, then an achievable rate region is established that is independent of the interference signal. Finally by presenting some numerical examples for this channel model, we showed a noticeable enlargement of the achievable rate region due to cooperation between the encoders.
\section*{Appendix I}
To bound probability of error and derive the achievable rate region in
Theorem \ref{th:theorem1}, we use the union bound, packing lemma, AEP and Markov
lemma \cite{elgamal2011network}. In the following, by bounding
probability of error events at the encoders and the receiver, the rate
constraints given in Theorem \ref{th:theorem1} are established.

\emph{Analysis of probability of error}: The average error probability
for each block $b\in[1:B]$ is given by
\begin{eqnarray}
\label{eq:erranlyz}
\nonumber{\pr}_e&\leq&\sum_{(\bs_0,\bs_1,\bs_2)\notin A_\epsilon^n(S_0,S_1,S_2)}p(\bs_0,\bs_1,\bs_2)
+\sum_{(\bs_0,\bs_1,\bs_2)\in A_\epsilon^n (S_0,S_1,S_2)}p(\bs_0,\bs_1,\bs_2)\times \pr(\text{Error}|\bs_0,\bs_1,\bs_2)\\
&_{\leq}^{(a)}& \varepsilon+\sum_{(\bs_0,\bs_1,\bs_2)\in A_\epsilon^n (S_0,S_1,S_2)}p(\bs_0,\bs_1,\bs_2)\times \pr(\text{Error}|\bs_0,\bs_1,\bs_2),
\end{eqnarray}
where $(a)$ follows by AEP as $n \rightarrow \infty$
\cite{elgamal2011network}. To bound the second term, we need to
analyze all error events that may occur at the encoders and the
receiver for $b\in[1:B]$. First, we analyze error events at the
encoders and then error events at the receiver are considered.
\begin{itemize}
\item \emph{Error events due to the decoding at encoder 1}:\\
Suppose that encoder 2 has transmitted $M^b_{21}$ over block $b$. Encoder 1 receives $\by^b_1$, and tries to decode $M^b_{21}$ given $(\bu(\doublehat{m}_c^{b},j^*_0),\bs^b_1,\bx^b_1)$ by joint typicality decoding given in (\ref{eq: decodingEn1}). Thus, the following error events are defined.
\begin{align*}
  &
  E_{1,1}^b:=\Big\{\big(\bu(\doublehat{m}_c^{b},j^*_0),\bv_2(\doublehat{m}^b_c,j^*_{u_0},M^b_{21},j_2),\by_1^b,\bs_0^b,\bs^b_1,\bx^b_1\big)
  \notin A_\epsilon^n (U,V_2,Y_1,S_0,S_1,X_1)\Big\}\\\nonumber &
  E_{1,2}^b:=\Big\{\exists m^b_{21}\neq M^b_{21}
  :\big(\bu(\doublehat{m}_c^{b},j^*_0),\bv_2(\doublehat{m}^b_c,j^*_{u_0},m^b_{21},j_2),\by_1^b,\bs_0^b,\bs^b_1,\bx^b_1\big)\in
  A_\epsilon^n (U,V_2,Y_1,S_0,S_1,X_1)\Big\}.
\end{align*}
By Markov lemma \cite{elgamal2011network}, $P(E^b_{1,1})$ tends to zero as $n\rightarrow\infty$.
Next, utilizing the union bound and joint typicality lemma, the error event is bounded as
\begin{eqnarray}\label{eq: R21}
\pr\big(E_{1,2}^b|\overline{E_{1,1}^b}\big)&\leq& 2^{nR_{21}}J_2 \times 2^{-n[I(V_2;S_0|U)+I(V_2;Y_1|S_0S_1UX_1)-\epsilon]},
\end{eqnarray}
where $\overline{E_{1,1}^b}$ denotes the complement of $E_{1,1}^b$. Substituting $J_2$ in (\ref{eq: R21}), the probability of error in (\ref{eq: R21}) tends to zero for sufficiently large $n$, if
\begin{eqnarray}\label{eq: R21Final}
R_{21}&<& I(V_2;Y_1|S_0S_1UX_1)-I(V_2;S_2|S_0U).
\end{eqnarray}
\item \emph{Error events due to the decoding at encoder 2}:\\
With a similar discussion to encoder 1, error events at encoder 2 can be defined. The probability of error tends to zero for sufficiently large $n$, if
\begin{eqnarray}\label{eq: R12Final}
R_{12}&<& I(V_1;Y_2|S_0S_2UX_2)-I(V_1;S_1|S_0U).
\end{eqnarray}
\item \emph{Error events due to the decoding at the receiver}:\\
We analyze the decoding error events for block $b$. It is assumed that the encoders have transmitted $(M^b_c,M^b_{12},M^b_{21},$ $M^b_{13},M^b_{23})$ over block $b$. Since the receiver uses the backward decoding, it is also assumed that the receiver already knows $\hat{M}^{b+1}_c=(\hat{M}^b_{12},\hat{M}^b_{21})$ from decoding of $\by_3^{b+1}$. Therefore, using $\by_3^b$, the receiver tries to estimate $(M^b_c,M^b_{13},M^b_{23})$ form the joint typicality of sequences given in (\ref{eq: RXdecoding}). Thus, we have 11 error events to consider as summarized in Table~\ref{tabel: errorevents}.
\begin{table*}[!t]\footnotesize
\caption{\small Illustration of error events at the receiver. Checkmark denotes correctness of the message or index, and cross denotes error.}\label{tabel: errorevents}
\begin{center}
\begin{tabular}{ |c || c| c | c| c|c|c|c|}\hline
&$m_c$& $j_0$ & $j_1$ & $j_2$ & $m_{13}$ & $m_{23}$ \\ \hline\hline
$E^b_{3,1}$& $\checkmark$& $\checkmark$ & $\checkmark$ & $\checkmark$ & $\checkmark$ & $\checkmark$ \\ \hline
$E^b_{3,2}$& $\times$& $\times$ & $\times$ & $\times$ & $\times$ & $\times$ \\ \hline
$E^b_{3,3}$& $\checkmark$& $\times$ & $\times$ & $\times$ & $\times$ & $\times$ \\ \hline
$E^b_{3,4}$& $\checkmark$& $\checkmark$ & $\times$ & $\times$ & $\times$ & $\times$ \\ \hline
$E^b_{3,5}$& $\checkmark$& $\checkmark$ & $\checkmark$ & $\times$ & $\times$ & $\times$ \\ \hline
$E^b_{3,6}$& $\checkmark$& $\checkmark$ & $\times$ & $\checkmark$ & $\times$ & $\times$ \\ \hline
$E^b_{3,7}$& $\checkmark$& $\checkmark$ & $\checkmark$ & $\checkmark$ & $\times$ & $\times$ \\ \hline
$E^b_{3,8}$& $\checkmark$& $\checkmark$ & $\checkmark$ & $\checkmark$ & $\checkmark$ & $\times$ \\ \hline
$E^b_{3,9}$& $\checkmark$& $\checkmark$ & $\checkmark$ & $\checkmark$ & $\times$ & $\checkmark$ \\ \hline
$E^b_{3,10}$& $\checkmark$& $\checkmark$ & $\checkmark$ & $\times$ & $\checkmark$ & $\times$ \\ \hline
$E^b_{3,11}$& $\checkmark$& $\checkmark$ & $\times$ & $\checkmark$ & $\times$ & $\checkmark$ \\ \hline
\end{tabular}
\end{center}
\vspace{-.5cm}
\end{table*}
\begin{remark}\emph{Since multi-layer GPC technique is used for the codeword generations, to decode $m_{k3}$ correctly, we need to decode $j_0$ and $j_k$ for $k\in\{1,2\}$. Thus, these error events are also considered in the analysis of error events at the receiver.}
\end{remark}
Various error events which needs to be bounded are summarized in Table~\ref{tabel: errorevents}. First, let
\begin{align*}
E_{3,1}^b&:=\{E^b_D|_{M^b_c,M^b_{13},M^b_{23}} \notin A_\epsilon^n (D)\}\\
E_{3,2}^b&:=\{\exists m_c^b\neq M^b_c: E^b_D \in A_\epsilon^n (D)\}\\
E_{3,3}^b&:=\{\exists j_0\neq j^*_0: E^b_D|_{M_c^b} \in A_\epsilon^n (D)\}.
\end{align*}
By use of Markov lemma \cite{elgamal2011network}, $P(E^b_{3,1})$ tends to zero as $n\rightarrow\infty$. By utilizing joint typicality lemma and union bound, we have
\begin{subequations}\label{eq: Rs}
\begin{align}
\pr\big(E_{3,2}^b|\overline{E_{3,1}^b} \big)&\leq
2^{n(R_{12}+R_{21}+R_{13}+R_{23})}J_0J_1J_2J_{13}J_{23}\times 2^{-n[I(UV_1V_2V_{13}V_{23};Y_3)+I(V_1V_{13};V_2V_{23}|U)-\epsilon]}\\
\pr\big(E_{3,3}^b|\overline{E_{3,1}^b} \big)&\leq
2^{n(R_{13}+R_{23})}J_0J_1J_2J_{13}J_{23} \times 2^{-n[I(UV_1V_2V_{13}V_{23};Y_3)+I(V_1V_{13};V_2V_{23}|U_0)-\epsilon]}.
\end{align}
\end{subequations}
Since $0\leq R_{12}+R_{21}$, the rate constraint obtained via (\ref{eq: Rs}b) becomes redundant as compared to that found via (\ref{eq: Rs}a). Substituting $J_0,J_k,J_{k3}$ for $k\in\{1,2\}$ in (\ref{eq: Rs}a), we find the following sum-rate
\begin{eqnarray}\label{eq: RsFinal}\nonumber
R_{12}+R_{21}+R_{13}+R_{23} &<& I(UV_1V_2V_{13}V_{23};Y_3)-I(UV_1V_2V_{13}V_{23};S_0)-\\
&& I(V_1V_{13};S_1|S_0U)-I(V_2V_{23};S_2|S_0U).
\end{eqnarray}
Now define the following error events
\begin{align*}
E_{3,4}^b&:=\{\exists j_1\neq j^*_1 \text{ and } j_2 \neq j^*_2: E^b_D|_{M_c^b,j^*_0} \in A_\epsilon^n (D)\}\\\nonumber
E_{3,5}^b&:=\{\exists j_2\neq j^*_2 \text{ and } m^b_{13} \neq M^b_{13}: E^b_D|_{M_c^b,j^*_0,j^*_1} \in A_\epsilon^n (D)\}\\\nonumber
E_{3,6}^b&:=\{\exists j_1\neq j^*_1 \text{ and } m^b_{23} \neq M^b_{23}: E^b_D|_{M_c^b,j^*_0,j^*_2} \in A_\epsilon^n (D)\}
\end{align*}
Again by applying union bound technique and joint typicality lemma, the following probability of errors are bounded as
\begin{subequations}\label{eq: Rj}
\begin{align}
\pr\big(E_{3,4}^b|\bigcap_{m=1}^3 \overline{E_{3,m}^b} \big)&\leq
2^{n(R_{13}+R_{23})}J_1J_2J_{13}J_{23}\times2^{-n[I(V_1,V_2,V_{13},V_{23};Y_3|U)+I(V_2V_{23};V_1V_{13}|U)-\epsilon]}\\
\pr\big(E_{3,5}^b|\bigcap_{m=1}^3 \overline{E_{3,m}^b} \big)&\leq
2^{n(R_{13}+R_{23})}J_2J_{13}J_{23}\times2^{-n[I(V_2,V_{13},V_{23};Y_3|UV_1)+I(V_2V_{23};V_1V_{13}|U)-\epsilon]}\\
\pr\big(E_{3,6}^b|\bigcap_{m=1}^3 \overline{E_{3,m}^b} \big)&\leq
2^{n(R_{13}+R_{23})}J_1J_{13}J_{23}\times2^{-n[I(V_1,V_{13},V_{23};Y_3|UV_2)+I(V_1V_{13};V_2V_{23}|U)-\epsilon]}.
\end{align}
\end{subequations}
Plugging in $J_k$ and $J_{k3}$ for $k\in\{1,2\}$ in (\ref{eq: Rj}), subject to the following rate constraints, the probability of errors in (\ref{eq: Rj}) tend to zero for sufficiently large $n$.
\begin{subequations}\label{eq: Rjfinal}
\begin{eqnarray}\nonumber
R_{13}+R_{23} &<& I(V_1V_2V_{13}V_{23};Y_3|U)-I(V_1V_2V_{13}V_{23};S_0|U)-\\
&& I(V_1V_{13};S_1|S_0U)-I(V_2V_{23};S_2|S_0U)\\\nonumber
R_{13}+R_{23} &<& I(V_2V_{13}V_{23};Y_3|UV_1)-I(V_2V_{13}V_{23};S_0|UV_1)-\\
&& I(V_{13};S_1|S_0UV_1)-I(V_2V_{23};S_2|S_0U)\\\nonumber
R_{13}+R_{23} &<& I(V_1V_{13}V_{23};Y_3|UV_2)-I(V_1V_{13}V_{23};S_0|UV_2)-\\
&& I(V_1V_{13};S_1|S_0U)-I(V_{23};S_2|S_0UV_2).
\end{eqnarray}
\end{subequations}

\end{itemize}
Following a same steps, other error events are bounded. Finally by use of the following remarks, the derived rate constraints are simplified and Theorem 1 is proved.
%
\begin{remark}\emph{
From the given pmf in (\ref{eq: pmf}), the following Markov chains are valid.}
\begin{subequations}\label{eq: markovchain}
\begin{eqnarray}
& S_1S_2 \longleftrightarrow S_0 \longleftrightarrow U, ~~~
V_1V_{13} \longleftrightarrow S_0U \longleftrightarrow V_2V_{23}.
\end{eqnarray}
\end{subequations}
\end{remark}
\vspace{-0.5cm}
\section*{Appendix II}
In the following, we derive the optimal precoding coefficients. Note that, if we could completely remove the interference over the channel using the proposed multi-layer Costa precoding in (\ref{eq:txsignals}) and (\ref{eq: x_k}), then the knowledge of the interference at the destination would \emph{not} increase the achievable rate region. Thus the rate region with optimal precoding should equal that when $S_0$ is also known at the destination. That is the following equalities should be satisfied for $k \in \{1,2\}$.
\begin{subequations}\label{eq: lemma1}
\begin{align}
I(V_{k3};Y_3|UV_1V_2V_{3-k,3})&=I(V_{k3};Y_3S_0|UV_1V_2V_{3-k,3}),\\
I(V_{13}V_{23};Y_3|UV_1V_2)&=I(V_{13}V_{23};Y_3S_0|UV_1V_2),\\
I(UV_1V_2V_{13}V_{23};Y_3)&=I(UV_1V_2V_{13}V_{23};Y_3S_0).
\end{align}
\end{subequations}
We next show that there exist precoding coefficients such that (\ref{eq: lemma1}d) holds. Thus, it is sufficient to prove;
\begin{equation}\label{eq: hproof1}
h(UV_1V_2V_{13}V_{23}|Y_3)=h(UV_1V_2V_{13}V_{23}|Y_3S_0).
\end{equation}
Next consider the left hand side of (\ref{eq: hproof1})
\begin{eqnarray}\label{eq: hproof2}\nonumber
h\big(UV_1V_2V_{13}V_{23}|Y_3\big)&=& h\big(X_u+\alpha_0S_0,\sqrt{\rho_1P_1}X_u+X'_1+\alpha_1S_0,\sqrt{\rho_2P_2}X_u+X'_2+\alpha_2S_0,X''_1+
\alpha_{13}S_0,\\\nonumber
&&X''_2+\alpha_{23}S_0\big|X_1+X_2+Z_3+S_0\big)\\
&=&h\big(\mathcal{E}_{U},\mathcal{E}_{V_1},\mathcal{E}_{V_2},\mathcal{E}_{V_{13}},\mathcal{E}_{V_{23}}|X_1+X_2+Z_3+S_0\big),
\end{eqnarray}
where, for $k\in\{1,2\}$
\begin{subequations}\label{eq: phi}
\begin{align}
\mathcal{E}_{U}&:=X_u-\alpha_0\big(X_1+X_2+Z_3\big),\\
 \mathcal{E}_{V_k}&:=\sqrt{\rho_kP_k}X_u+X'_k-\alpha_k(X_1+X_2+Z_3\big),\\
 \mathcal{E}_{V_{k3}}&:=X''_k-\alpha_{k3}(X_1+X_2+Z_3\big),
\end{align}
\end{subequations}
and $X_k=\sqrt{\rho_k P_k}X_u+X'_k+X''_k$ for $k\in\{1,2\}$. Now choose the precoding coefficients such that the second norm of the random variables defined in (\ref{eq: phi}) is minimized. This yields (\ref{eq: DPCcoeff}a)--(\ref{eq: DPCcoeff}c). Note that with these optimal coefficients we have
\begin{subequations}\label{eq: expectetion}
\begin{align}
\mathbb{E}\{\mathcal{E}_{U}\times (X_1+X_2+Z_3)\}&=0,~~
\mathbb{E}\{\mathcal{E}_{V_k}\times (X_1+X_2+Z_3)\}=0,~~
\mathbb{E}\{\mathcal{E}_{V_{k3}}\times (X_1+X_2+Z_3)\}=0.
\end{align}
\end{subequations}
Finally observe that the random variables in (\ref{eq: phi}) are independent of $S_0$. Therefore, (\ref{eq: hproof2}) is equal to
\begin{eqnarray}\label{eq: hproofinal}
&&h(UV_1V_2V_{13}V_{23}|Y_3)=h(\mathcal{E}_{U},\mathcal{E}_{V_1},\mathcal{E}_{V_2},\mathcal{E}_{V_{13}},\mathcal{E}_{V_{23}})=
h(\mathcal{E}_{U},\mathcal{E}_{V_1},\mathcal{E}_{V_2},\mathcal{E}_{V_{13}},\mathcal{E}_{V_{23}}|Y_3,S_0).
\end{eqnarray}
This completes the proof of (\ref{eq: hproof1}) or equivalently (\ref{eq: lemma1}d). With a similar discussions (\ref{eq: lemma1}a)--(\ref{eq: lemma1}c) hold.
\vspace{-0.5cm}
\section*{Appendix III}
\begin{proof}[Proof of Proposition \ref{propos:proposition2}] To prove the proposition, it is sufficient to substitute the auxiliary random variables defined in (\ref{eq: AuxRvDGMAC}) into the achievable rate region given in Theorem \ref{th:theorem1}. For instance, we evaluate the term $h(Y_3|UV_1V_2V_{23})$ which is used in $R_{13}$, see Theorem \ref{th:theorem1}.
\begin{itemize}
\item Plug in $X_k$ form (\ref{eq: AuxRvDGMAC}c) into (\ref{eq:gaussianmodel}), we have
\begin{equation}
Y_3=(\sqrt{\eta_1\rho_1P_1}+\sqrt{\eta_2\rho_2P_2})^2U+X'_1+X''_1+X'_2+X''_2+\beta_1 S_1+\beta_2S_2+Z_3
\end{equation}
\item Since $U$ is known, $Y_3|U,V_1,V_2,V_{23}$ is equivalent to
\begin{align}\label{eq:14}\nonumber
&X'_1+X''_1+X'_2+X''_2+\beta_1 S_1+\beta_2S_2+Z_3 |U,X'_1+\alpha_1 \beta_1 S_1,X'_2+\alpha_2 \beta_2 S_2, X''_2+\alpha_{23} \beta_2 S_2\\
\equiv& X''_1+\beta_1(1-\alpha_1)S_1+\beta_2(1-\alpha_2-\alpha_{23})S_2+Z_3|X'_1+\alpha_1 \beta_1 S_1,X'_2+\alpha_2 \beta_2 S_2, X''_2+\alpha_{23} \beta_2 S_2
\end{align}
As $U$ is independent form other random variables, it is dropped from (\ref{eq:14}).
\item By use of minimum mean square error (MMSE) estimator, $S_1$ is estimated from $X'_1+\alpha_1 \beta_1 S_1$, and $S_2$ is estimated from $(X'_2+\alpha_2 \beta_2 S_2, X''_2+\alpha_{23} \beta_2 S_2)$. Therefore, $Y_3|U,V_1,V_2,V_{23}$ is equivalent to
\begin{align}\label{eq:15}
X''_1+\beta_1(1-\alpha_1)(S_1-\hat{S}_1)+\beta_2(1-\alpha_2-\alpha_{23})(S_2-\doublehat{S}_2)+Z_3,
\end{align}
where
\begin{align*}
&\mathbb{E} \{\beta_1^2(1-\alpha_1)^2(S_1-\hat{S}_1)^2\} = \hat{Q}_{1e},~~~~~
\mathbb{E} \{\beta_2^2(1-\alpha_2-\alpha_{23})^2(S_2-\doublehat{S}_2)^2\} = \doublehat{Q}_{2e}.
\end{align*}
Therefore,
\begin{equation}
h(Y_3|U,V_1,V_2,V_{23})=\tfrac{1}{2}\log_2\Big(2\pi e \big(P''_{1e}+\hat{Q}_{1e}+\doublehat{Q}_{2e}+N_3)\Big).
\end{equation}
With a similar discussions, Proposition \ref{propos:proposition2} is derived.
\end{itemize}
\end{proof}
\vspace{-1cm}
\bibliographystyle{ieeetr}
\bibliography{IEEEabrv,ref}

\end{document}